\documentclass[a4paper,11pt]{article}
 \usepackage[]{geometry}
 \usepackage[utf8]{inputenc}
 \usepackage[parfill]{parskip}
 \usepackage{graphicx}	
 \usepackage{caption}
 \usepackage{subcaption}		
 \usepackage{amssymb}
 \usepackage{amsmath}
 \usepackage{amsthm}
 \usepackage{natbib}
 \usepackage{lscape}
 \usepackage{longtable}
 \usepackage[flushleft]{threeparttable}
 \usepackage{tabularx}
 \usepackage{booktabs}
 \usepackage{multirow}
 \usepackage{bbm}
 \usepackage{xcolor}
 \usepackage[title]{appendix}
 \usepackage[colorlinks,citecolor=blue,urlcolor=blue,bookmarks=false,hypertexnames=true]{hyperref}

 \newcommand\numberthis{\addtocounter{equation}{1}\tag{\theequation}}%add number to equation in align*
 \usepackage{tikz}
 \usetikzlibrary{arrows}
 \usepackage{enumitem}

 \newtheorem{assume}{Assumption}
 
  % to be used in the proof for theorems
 \newtheorem{prop}{Proposition} % small theorem
 
 \newtheorem{theorem}{Theorem} % main result
 
 \newtheorem{remark}{Remark}

 % \title{Individual Causal Inference\\Using Interactive Fixed Effects Models\\ With Multiple Outcomes}
 \title{Individual Causal Inference\\Using Panel Data With Multiple Outcomes}
 % : \\Estimating the Effects of Health Insurance Coverage on Individual Usage of Hospital Emergency Departments

 \author{Wei Tian\\UNSW
 % \thanks{Wei Tian, School of Economics, UNSW, Australia, \texttt{w.tian@unsw.edu.au}.
 % I am grateful to my supervisors, Valentyn Panchenko and Seojeong (Jay) Lee, for their guidance and support.
 % This study also benefited from suggestions by Robert Kohn, Chris Carter, Michele De Nadai, Barton Lee, Tatsushi Oka, Dongwoo Kim, and participants at the ANZESG2020 conference and seminars at USYD and UNSW.
 % I acknowledge the support of the Australian Government Research Training Program Scholarship.
 % This research includes computations using the computational cluster Katana supported by Research Technology Services at UNSW Sydney.
 % }
 }
\date{August 24, 2021}

 \begin{document}
 \maketitle

 \begin{abstract}
   Policy evaluation in empirical microeconomics has been focusing on estimating the average treatment effect and more recently the heterogeneous treatment effects, often relying on the unconfoundedness assumption. We propose a method based on the interactive fixed effects model to estimate treatment effects at the individual level, which allows both the treatment assignment and the potential outcomes to be correlated with the unobserved individual characteristics. This method is suitable for panel datasets where multiple related outcomes are observed for a large number of individuals over a small number of time periods. Monte Carlo simulations show that our method outperforms related methods. To illustrate our method, we provide an example of estimating the effect of health insurance coverage on individual usage of hospital emergency departments using the Oregon Health Insurance Experiment data.
   We find heterogeneous treatment effects in the sample. Comparisons between different groups show that the individuals who would have fewer emergency-department visits if covered by health insurance were younger and not in very bad physical conditions. However, their access to primary care were limited due to being in much more disadvantaged positions financially, which made them resort to using the emergency department as the usual place for medical care.
   Health insurance coverage might have decreased emergency-department use among this group by increasing access to primary care and possibly leading to improved health.
   In contrast, the individuals who would have more emergency-department visits if covered by health insurance were more likely to be older and in poor health. So even with access to primary care, they still used emergency departments more often for severe conditions, although sometimes for primary care treatable and non-emergent conditions as well.
   Health insurance coverage might have increased their emergency-department use by reducing the out-of-pocket cost of the visits.
 \end{abstract}

\section{Introduction}

The main focus of the policy evaluation literature has been the average treatment effect and more recently the heterogeneous treatment effects or conditional average treatment effects, which are the average treatment effects for heterogeneous subgroups defined by the observed covariates (for reviews of these methods, see \citealp*{athey2017state,abadie2018econometric}).
Ubiquitous in these studies is the unconfoundedness assumption, or the strong ignorability assumption, which requires all the covariates correlated with both the potential outcomes and the treatment assignment to be observed \citep{rosenbaum1983central}.\footnote{This is also known as selection on observables or the conditional independence assumption.}
Under this assumption, the potential outcomes and the treatment status are independent conditional on the observed covariates, and the difference between the mean outcomes of the treated and the untreated groups with the same values of the observed covariates is an unbiased estimator of the average treatment effect for the units in the groups.
The unconfoundedness assumption is satisfied in randomised controlled experiments, but may not be plausible otherwise even with a rich set of covariates, since the access to certain essential individual characteristics remains limited for the researchers due to privacy or ethical concerns, despite the explosive growth of data availability in the big data era.

One popular method to circumvent the unconfoundedness assumption is difference-in-differences (DID), which assumes that the effect of the unobserved confounder on the untreated potential outcome is constant over time, so that the average outcomes of the treated and untreated units would follow parallel trends in the absence of the treatment.\footnote{Alternative methods that do not rely on the unconfoundedness assumption include the instrumental variables method and the regression discontinuity design, which estimate the average treatment effect for specific subpopulations (the compliers or those with values of the running variable near the cutoff).} This is also a strong assumption, and in many cases is not supported by data.
The interactive fixed effects model relaxes the ``parallel trends'' assumption and allows the unobserved confounders to have time-varying effects on the outcomes, by modeling them using an interactive fixed effects term, which incorporates the additive unit and time fixed effects model or difference-in-differences as a special case \citep{bai2009panel}.

Several methods have been developed based on the interactive fixed effects model to estimate the treatment effect on a single or several treated units, where the units are observed over an extended period of time before the treatment \citep{abadie2010synthetic,hsiao2012panel,xu2017generalized}. These methods exploit the cross-sectional correlations attributed to the unobserved common factors to predict the counterfactual outcomes for the treated units, and are mainly used in macroeconomic settings with a large number of pretreatment periods, which is crucial for the results to be credible. For example, \cite{abadie2015comparative} point out that ``the applicability of the method requires a sizable number of preintervention periods'' and that ``we do not recommend using this method when the pretreatment fit is poor or the number of pretreatment periods is small'', while \cite{xu2017generalized} states that users should be cautious when there are fewer than 10 pretreatment periods.
As a consequence, despite the potential to estimate individual treatment effects without imposing the unconfoundedness assumption, these methods have not seen much use in empirical microeconomics, since the individuals are rarely tracked for more than a few periods that justify the use of these methods.

The main contribution of this paper is that we propose a method for estimating the individual treatment effects in applied microeconomic settings, characterised by multiple related outcomes being observed for a large number of individuals over a small number of time periods.
The method is based on the interactive fixed effects model, which assumes that an outcome of interest can be well approximated by a linear combination of a small number of observed and unobserved individual characteristics.
Analogous to \cite{hsiao2012panel} who predict the posttreatment outcomes using pretreatment outcomes in lieu of the unobserved time factors, we use a subsample of the pretreatment outcomes to replace the unobserved individual characteristics in the models, and use the remaining pretreatment outcomes as instrumental variables.
Although our method does not require a large number of pretreatment periods, the number of pretreatment outcomes needs to be at least as large as the number of unobserved individual characteristics, which may still be difficult to satisfy in microeconomic datasets if we use only a single outcome, especially if the treatment assignment took place in the early stages of the survey or if the study subjects are children or youths.
Utilising multiple related outcomes allows our method to be applicable in cases where there is only a single period before the treatment.
Under the assumption that these outcomes depend on roughly the same set of observed covariates and unobserved individual characteristics with time-varying and outcome-specific coefficients shared by all individuals, our method exploits the correlations across related outcomes and over time, which are induced by the unobserved individual characteristics, to predict the counterfactual outcomes and estimate the treatment effects for each individual in the posttreatment periods.

Our method has several advantages.
First, with the assumption on the model specification, it relaxes the arguably much stronger unconfoundedness assumption, and allows the treatment assignment to be correlated with the unobserved individual characteristics.
Second, it enables the estimation of treatment effects on the individual level, which may be helpful for designing more individualised policies to maximize social welfare, as well as for other fields such as precision medicine and individualised marketing. It also has the potential to be combined with more flexible machine learning methods to work with big datasets and more general nonlinear function forms.
Third, it is intuitive. In real life, we may never know a person through and through, and a viable approach to predicting the outcome of a person is using his or her related outcomes in the past, assuming that the outcomes are affected by the underlying individual characteristics and that these characteristics are stable over time, at least within the study period. For example, past academic performance is an important consideration when recruiting a student into college, as it is believed that a student that excelled in the past is likely to continue to have outstanding performance. To the extent that we may never observe all the confounders, this is perhaps the only way to predict potential outcomes and estimate treatment effects on the individual level in social sciences without going deeper to the levels of neuroscience or biology.
Fourth, our method has wide applicability, as it is common to have multiple related outcomes collected in microeconomics data. For example, we may observe several health related outcomes such as health facility usage, health related cost, general health, etc.

The rest of the study is organised as follows.
Section \ref{Ch3_sec_theory} presents the theoretical framework.
Section \ref{Ch3_sec_sim} examines the small sample performance of our method using Monte Carlo simulation, and compares it with related methods.
Section \ref{Ch3_sec_app} provides an empirical example of estimating the effect of health insurance coverage on individual usage of hospital emergency departments using the Oregon Health Insurance Experiment data.
Section \ref{Ch3_sec_con} concludes and discusses potential directions for future research. The proofs are collected in the appendix.

\section{Theory}\label{Ch3_sec_theory}

\subsection{Set Up}

Suppose that we observe $K$ outcomes in domain $\mathcal{K}=\{1,2,\dots,K\}$ for $N$ individuals or units over $T\ge 2$ time periods, where a domain refers to a collection of related outcomes that depend on the same set of observed covariates and unobserved characteristics. For example, health-related outcomes may be affected by observed covariates such as age, education, occupation and income, as well as unobserved individual characteristics such as genetic inheritance, health habits and risk preferences.
Assume that the $N_1$ individuals in the treated group $\mathcal{T}$ receive the treatment at period $T_0+1\le T$ and remain treated afterwards, while the $N_0=N-N_1$ individuals in the control group $\mathcal{C}$ remain untreated throughout the $T$ periods.
Denoting the binary treatment status for individual $i$ at time $t$ as $D_{it}$, we have $D_{it}=1$ for $i\in\mathcal{T}$ and $t>T_0$, and $D_{it}=0$ otherwise.

Following the ``Rubin Causal Model'' \citep{rubin1974estimating}, the treatment effect on outcome $k\in\mathcal{K}$ for individual $i$ at time $t$ is given by the difference between the treated and untreated potential outcomes
\begin{align}\label{Ch3_eq_ITE}
  \tau_{it,k}=Y_{it,k}^1-Y_{it,k}^0,
\end{align}
where $Y_{it,k}^1$ is the treated potential outcome, the outcome that we would observe for individual $i$ at time $t$ if $D_{it}=1$, and $Y_{it,k}^0$ is the untreated potential outcome, the outcome that we would observe if $D_{it}=0$.
% Since we only observe one of the potential outcomes for individual $i$ at any time $t$, depending on the treatment status $D_{it}$, i.e., the observed outcome is given by
% \begin{align}
%   Y_{it,k}=D_{it}Y_{it,k}^1+\left(1-D_{it}\right)Y_{it,k}^0,
% \end{align}
% we need to predict the counterfactual outcome.
% Therefore, the problem of estimating the individual treatment effects becomes the problem of predicting the potential outcomes for the individuals.
Instead of assuming the unconfoundedness condition, we characterise the two potential outcomes for individual $i$ at time $t$ and $k\in\mathcal{K}$ using the interactive fixed effects models:
\begin{align}
  Y_{it,k}^1= & \boldsymbol{X}_{it}'\boldsymbol{\beta}_{t,k}^1+\boldsymbol{\mu}_i'\boldsymbol{\lambda}_{t,k}^1+\varepsilon_{it,k}^1,\label{Ch3_eq_IFE1} \\
  Y_{it,k}^0= & \boldsymbol{X}_{it}'\boldsymbol{\beta}_{t,k}^0+\boldsymbol{\mu}_i'\boldsymbol{\lambda}_{t,k}^0+\varepsilon_{it,k}^0,\label{Ch3_eq_IFE0}
\end{align}
where $\boldsymbol{X}_{it}$ is the $r\times 1$ vector of observed covariates unaffected by the treatment, $\boldsymbol{\mu}_i$ is the $f\times 1$ vector of unobserved individual characteristics, $\boldsymbol{\beta}_{t,k}^1$ and $\boldsymbol{\lambda}_{t,k}^1$ are the $r\times 1$ and $f\times 1$ vectors of coefficients of $\boldsymbol{X}_{it}$ and $\boldsymbol{\mu}_i$ respectively for the treated potential outcome, $\boldsymbol{\beta}_{t,k}^0$ and $\boldsymbol{\lambda}_{t,k}^0$ are the coefficients for the untreated potential outcome, and $\varepsilon_{it,k}^1$ and $\varepsilon_{it,k}^0$ are the idiosyncratic shocks.

% \begin{remark}
%   \normalfont
%   The models in \eqref{Ch3_eq_IFE1} and \eqref{Ch3_eq_IFE0} together indicate that the treatment effects are deterministic functions of the underlying predictors. This is more general than assuming that the treatment effects are constant. A similar assumption is discussed in \cite{athey2021matrix}, where the treatment effect is assumed to have a low-rank pattern.
%   To estimate the treatment effects on the treated, we only need to assume the functional form in \eqref{Ch3_eq_IFE0} for the untreated potential outcomes and that $Y_{it,k}^1=Y_{it,k}^0+\tau_{it,k}$, where $\tau_{it,k}$ is treated as fixed given the sample, as in \cite{abadie2010synthetic}, \cite{hsiao2012panel} and \cite{xu2017generalized}.
%   The model specification for the treated potential outcome in \eqref{Ch3_eq_IFE1} is needed to estimate the treatment effects for the untreated individuals.
% \end{remark}

\begin{remark}
  \normalfont
  Our models for the potential outcomes are quite general, and incorporate the models in \cite{abadie2010synthetic}, \cite{hsiao2012panel} and \cite{xu2017generalized}, as well as the the additive fixed effects model for difference-in-differences as special cases.\footnote{Specifically, if we assume $\boldsymbol{\beta}_{t,k}^0=\boldsymbol{\beta}_{k}^0$, model \eqref{Ch3_eq_IFE0} reduces to the model in \cite{bai2009panel} and \cite{xu2017generalized}; if we assume $\boldsymbol{X}_{it}=\boldsymbol{X}_{i}$ and the first element of $\boldsymbol{\mu}_i$ is 1, model \eqref{Ch3_eq_IFE0} reduces to the model in \cite{abadie2010synthetic}; if we assume $\boldsymbol{X}_{it}=\boldsymbol{X}_{i}$ are unobserved and the first element of $\boldsymbol{\lambda}_{t,k}^0$ is 1, then model \eqref{Ch3_eq_IFE0} reduces to the model in \cite{hsiao2012panel}; if we assume $\boldsymbol{\mu}_i=[1\enspace a_i]'$ and $\boldsymbol{\lambda}_{t,k}^0=[b_t\enspace 1]'$, then model \eqref{Ch3_eq_IFE0} reduces to the additive fixed effects model for difference-in-differences.}
  Note that the related outcomes need not depend on exactly the same set of observed covariates and unobserved individual characteristics. The vectors of outcome-specific and time-varying coefficients may contain 0, so that outcome $k$ may be affected by some of the observed covariates and unobserved individual characteristics in some periods, but not necessarily by all of them in all periods, as long as there is enough variation in the coefficients over time or across the outcomes. The potential outcomes are also allowed to depend on predictors not included in $\boldsymbol{X}_{it}$ or $\boldsymbol{\mu}_i$, as long as they are not correlated with the included predictors and the treatment status so that they can be treated as part of the idiosyncratic shock.
\end{remark}

\medskip

The regularity conditions on the observed covariates and the unobserved individual characteristics are stated in Assumption \ref{Ch3_assume_predictor}, and the assumptions on the idiosyncratic shocks are given in Assumption \ref{Ch3_assume_error}.

\begin{assume}[]\label{Ch3_assume_predictor}
  \leavevmode
  \begin{enumerate}[label=\arabic*)]
    \item $\boldsymbol{X}_{it}$, $\boldsymbol{\mu}_i$ are independent for all $i$, and are identically distributed for all $i\in\mathcal{T}$ and all $i\in\mathcal{C}$ respectively;
    \item There exists $M\in[0,\infty)$ such that $\mathbb{E}\Vert\boldsymbol{X}_{it}\Vert^4<M$ and $\mathbb{E}\Vert\boldsymbol{\mu}_{i}\Vert^4<M$.
  \end{enumerate}
\end{assume}

\begin{assume}[]\label{Ch3_assume_error}
  For $d\in \{0,1\}$, we have
  \leavevmode
  \begin{enumerate}[label=\arabic*)]
    \item $\mathbb{E}\left(\varepsilon_{it,k}^d\mid \boldsymbol{X}_{js},\boldsymbol{\mu}_j,D_{js}\right)=0$ for all $i$, $j$, $t$, $s$ and $k$;
    \item $\varepsilon_{it,k}^d$ are independent across $i$ and $t$;
    \item $\mathbb{E}\left(\varepsilon_{it,k}^d,\varepsilon_{it,l}^d\right)=\sigma_{t,kl}^d$ for all $i$, $t$, $k$, $l$;
    \item There exists $M\in[0,\infty)$ such that $\mathbb{E}|\varepsilon_{it,k}^d|^4<M$ for all $i$, $t$, $k$.
  \end{enumerate}
\end{assume}

\begin{remark}
  \normalfont
  The distributions of the observed covariates and the unobserved individual characteristics are allowed to differ for the treated and untreated individuals, i.e., selection on unobservables is allowed, which is a great advantage over the policy evaluation methods that rely on the unconfoundedness condition.
  The idiosyncratic shocks are assumed to have zero mean conditional on the observed covariates, unobserved individual characteristics and the treatment status. They are also assumed to be independent across individuals and time periods, as the unobserved interactive fixed effects that account for the cross-sectional and time-serial correlations have been separated out.\footnote{The idiosyncratic shocks may be allowed to be correlated both over time and across outcomes, as long as they can be modelled parametrically and removed using a quasi-differencing approach. This is left for future research.} Furthermore, they are assumed to be homoskedastic across individuals for our inference method to be valid. The last part in Assumption \ref{Ch3_assume_error} is a regularity condition which, together with the conditions in Assumption \ref{Ch3_assume_predictor}, ensures the weak law of large numbers and the central limit theorem hold.
\end{remark}

\medskip

Given the models for $Y_{it,k}^1$ and $Y_{it,k}^0$ in \eqref{Ch3_eq_IFE1} and \eqref{Ch3_eq_IFE0}, the individual treatment effect is identified by the observed covariates and the unobserved individual characteristics, i.e., two persons with the same values for these underlying predictors have identical individual treatment effect. Denote the set of observed covariates and unobserved individual characteristics as $\boldsymbol{H}_{it}=\left[\boldsymbol{X}_{it}'\enspace\boldsymbol{\mu}_i'\right]'$,
% $\boldsymbol{H}_i=\left[\boldsymbol{X}_{i1}'\enspace\cdots\enspace\boldsymbol{X}_{iT}'\enspace\boldsymbol{\mu}_i'\right]'$ 
% $m_{t,k}^1\left(\boldsymbol{h}\right)=\mathbb{E}\left(Y_{it,k}^1\mid \boldsymbol{H}_{it}=\boldsymbol{h}\right)$ and $m_{t,k}^0\left(\boldsymbol{h}\right)=\mathbb{E}\left(Y_{it,k}^0\mid \boldsymbol{H}_{it}=\boldsymbol{h}\right)$, 
then the individual treatment effect for individual $i$ with $\boldsymbol{H}_{it}=\boldsymbol{h}_{it}$ is given by
\begin{align}\label{Ch3_eq_ITE1}
  \bar{\tau}_{it,k}\equiv\mathbb{E}\left(Y_{it,k}^1-Y_{it,k}^0\mid \boldsymbol{H}_{it}=\boldsymbol{h}_{it}\right),
\end{align}
which may appear similar to the conditional average treatment effect, but is different by conditioning not only on the observed covariates, but also on the unobserved individual characteristics.\footnote{As we assume the parametric models for the potential outcomes in \eqref{Ch3_eq_IFE1} and \eqref{Ch3_eq_IFE0} for all individuals, there is no need to impose additional assumptions on the propensity distribution for the individual treatment effect to be identified on its full support.}
% as in studies that estimate conditional average treatment effects nonparametrically

Our goal is to estimate the individual treatment effects $\bar{\tau}_{it,k}$, $i=1,\dots,N$. Once we have estimated the individual treatment effects, the estimates of the average treatment effects for heterogeneous subgroups defined by some observed covariates, also known as the conditional average treatment effects, and the estimate for the average treatment effect for the sample or the population are also readily available using the average of the estimated individual treatment effects in the corresponding groups.

As $\boldsymbol{\mu}_i$ is not observed, a direct application of least squares estimation to estimate the models in \eqref{Ch3_eq_IFE1} and \eqref{Ch3_eq_IFE0} would suffer from omitted variables bias. Since we have multiple outcomes that depend on the same set of underlying predictors, and we observe the untreated potential outcomes for all individuals prior to the treatment, we can use these pretreatment outcomes to replace $\boldsymbol{\mu}_i$ in the models.\footnote{This is analogous to the first step of the approach in \cite{hsiao2012panel}, who predict the posttreatment outcomes using pretreatment outcomes in lieu of the unobserved time factors in a small $N$, big $T$ environment.}
Stacking the $K$ outcomes observed in $t\le T_0$, we have
\begin{align}
  \boldsymbol{Y}_{it}^0=\boldsymbol{\beta}_{t}^0\boldsymbol{X}_{it}+\boldsymbol{\lambda}_{t}^0\boldsymbol{\mu}_i+\boldsymbol{\varepsilon}_{it}^0,
\end{align}
where $\boldsymbol{Y}_{it}^0$ and $\boldsymbol{\varepsilon}_{it}^0$ are $K\times 1$, $\boldsymbol{\beta}_{t}^0$ is $K\times r$, and $\boldsymbol{\lambda}_{t}^0$ is $K\times f$.
Let $\mathcal{P}\subseteq\left\{1,\cdots,T_0\right\}$ be a set of $P$ pretreatment periods. We can further stack the outcomes over these periods to get
\begin{align}\label{Ch3_eq_IFE_stack}
  \boldsymbol{Y}_i^\mathcal{P}=\boldsymbol{\delta}_i^\mathcal{P}+\boldsymbol{\lambda}^\mathcal{P}\boldsymbol{\mu}_i+\boldsymbol{\varepsilon}_i^\mathcal{P},
\end{align}
where $\boldsymbol{\delta}_i^\mathcal{P}=\left[\cdots \enspace \left(\boldsymbol{\beta}_{s}^0\boldsymbol{X}_{is}\right)' \enspace \cdots\right]'$ with $s\in\mathcal{P}$ is $KP\times 1$, $\boldsymbol{\lambda}^\mathcal{P}$ is $KP\times f$, and $\boldsymbol{\varepsilon}_i^\mathcal{P}$ is $KP\times 1$.

\medskip

To be able to recover $\boldsymbol{\mu}_i$ from the covariates and outcomes observed in $\mathcal{P}$, we need the following full rank condition, which ensures that there is enough variation in the effects of the unobserved individual characteristics over time or across different outcomes.

\begin{assume}[]\label{Ch3_assume_rank}
  ${\boldsymbol{\lambda}^\mathcal{P}}'\boldsymbol{\lambda}^\mathcal{P}$ has rank $f$.
\end{assume}

\begin{remark}
  \normalfont
  Although we do not require the number of pretreatment outcomes to be large, Assumption \ref{Ch3_assume_rank} implies that $KP$ needs to be at least as large as $f$.
  As $T_0$ (and thus $P$) is usually small in empirical microeconomics, this assumption is made more plausible by having $K>1$, i.e., using multiple related outcomes.
\end{remark}

\begin{remark}
  \normalfont
  The number of factors $f$ is generally not observed. To determine $f$, one may use the method in \cite{bai2002determining} when both $N$ and $T$ are large. One may also adopt a cross-validation procedure to choose $f$ that minimises the out-of-sample mean squared prediction error, as in \cite{xu2017generalized}. Although we do not estimate the interactive fixed effects term directly, we may choose the number of pretreatment outcomes that best accommodates $f$ using cross-validation as well, which will be discussed in more details later.
\end{remark}

\medskip

Under Assumption \ref{Ch3_assume_rank}, we can pre-multiply both sides of equation \eqref{Ch3_eq_IFE_stack} by $({\boldsymbol{\lambda}^\mathcal{P}}'\boldsymbol{\lambda}^\mathcal{P})^{-1}{\boldsymbol{\lambda}^\mathcal{P}}'$ to obtain
\begin{align}\label{Ch3_eq_mu}
  \boldsymbol{\mu}_i=({\boldsymbol{\lambda}^\mathcal{P}}'\boldsymbol{\lambda}^\mathcal{P})^{-1}{\boldsymbol{\lambda}^\mathcal{P}}'(\boldsymbol{Y}_i^\mathcal{P}-\boldsymbol{\delta}_i^\mathcal{P}-\boldsymbol{\varepsilon}_i^\mathcal{P}).
\end{align}

Substituting \eqref{Ch3_eq_mu} into $Y_{it,k}^0=\boldsymbol{X}_{it}'\boldsymbol{\beta}_{t,k}^0+\boldsymbol{\mu}_i'\boldsymbol{\lambda}_{t,k}^0+\varepsilon_{it,k}^0$, $t>T_0$, and with a little abuse on the notation by omitting the superscript $\mathcal{P}$ on the new coefficients and error term, we have
\begin{align}\label{Ch3_eq_IFE_rewrite}
  Y_{it,k}^0 & =\boldsymbol{X}_{it}'\boldsymbol{\beta}_{t,k}^0\underbrace{-\cdots-\boldsymbol{X}_{is}'\boldsymbol{\alpha}_{st,k}^0-\cdots}_{P\ \text{terms}}+{\boldsymbol{Y}_i^\mathcal{P}}'\boldsymbol{\gamma}_{t,k}^0+e_{it,k}^0,
\end{align}
where
\begin{align*}
  \boldsymbol{\alpha}_{st,k}^0 & ={\boldsymbol{\beta}_{s}^0}'\boldsymbol{\lambda}^0_{s}({\boldsymbol{\lambda}^\mathcal{P}}'\boldsymbol{\lambda}^\mathcal{P})^{-1}\boldsymbol{\lambda}^0_{t,k},\enspace s\in\mathcal{P}, \\
  \boldsymbol{\gamma}_{t,k}^0  & =\boldsymbol{\lambda}^\mathcal{P}({\boldsymbol{\lambda}^\mathcal{P}}'\boldsymbol{\lambda}^\mathcal{P})^{-1}\boldsymbol{\lambda}^0_{t,k},                                               \\
  e_{it,k}^0                   & =\varepsilon_{it,k}^0-{\boldsymbol{\gamma}_{t,k}^0}'\boldsymbol{\varepsilon}_i^\mathcal{P}.
\end{align*}

Let $Z=r(P+1)+KP$. If we denote the $Z\times 1$ vector of observables $[\boldsymbol{X}_{it}' \cdots \boldsymbol{X}_{is}' \cdots {\boldsymbol{Y}_i^\mathcal{P}}']'$ as $\boldsymbol{Z}_{it}$, and the $Z\times 1$ vector of coefficients $[{\boldsymbol{\beta}_{t,k}^0}' \cdots {\boldsymbol{\alpha}_{st,k}^0}' \cdots {\boldsymbol{\gamma}_{t,k}^0}']'$ as $\boldsymbol{\theta}_{t,k}^0$, then equation \eqref{Ch3_eq_IFE_rewrite} can be abbreviated as
\begin{equation}\label{Ch3_eq_IFE0_abb}
  Y_{it,k}^0=\boldsymbol{Z}_{it}'\boldsymbol{\theta}_{t,k}^0+e_{it,k}^0.
\end{equation}

Similarly, substituting \eqref{Ch3_eq_mu} into $Y_{it,k}^1=\boldsymbol{X}_{it}'\boldsymbol{\beta}_{t,k}^1+\boldsymbol{\mu}_i'\boldsymbol{\lambda}_{t,k}^1+\varepsilon_{it,k}^1$, $t>T_0$, we have
\begin{equation}\label{Ch3_eq_IFE1_abb}
  Y_{it,k}^1=\boldsymbol{Z}_{it}'\boldsymbol{\theta}_{t,k}^1+e_{it,k}^1,
\end{equation}
where
\begin{align*}
  \boldsymbol{\theta}_{t,k}^1  & =[{\boldsymbol{\beta}_{t,k}^1}' \cdots {\boldsymbol{\alpha}_{st,k}^1}' \cdots {\boldsymbol{\gamma}_{t,k}^1}']',                                                                        \\
  \boldsymbol{\alpha}_{st,k}^1 & ={\boldsymbol{\beta}_{s}^0}'\boldsymbol{\lambda}^0_{s}({\boldsymbol{\lambda}^\mathcal{P}}'\boldsymbol{\lambda}^\mathcal{P})^{-1}\boldsymbol{\lambda}_{t,k}^1,\enspace s\in\mathcal{P}, \\
  \boldsymbol{\gamma}_{t,k}^1  & =\boldsymbol{\lambda}^\mathcal{P}({\boldsymbol{\lambda}^\mathcal{P}}'\boldsymbol{\lambda}^\mathcal{P})^{-1}\boldsymbol{\lambda}_{t,k}^1,                                               \\
  e_{it,k}^1                   & =\varepsilon_{it,k}^1-{\boldsymbol{\gamma}_{t,k}^1}'\boldsymbol{\varepsilon}_i^\mathcal{P}.
\end{align*}

\subsection{Estimation}

Under Assumption \ref{Ch3_assume_error}, we have $\mathbb{E}(e_{it,k}^1\mid \boldsymbol{H}_{it})=0$ and $\mathbb{E}(e_{it,k}^0\mid \boldsymbol{H}_{it})=0$. This suggests using $\widehat{\tau}_{it,k}=\boldsymbol{Z}_{it}'(\widehat{\boldsymbol{\theta}}_{t,k}^1-\widehat{\boldsymbol{\theta}}_{t,k}^0)$, where $\widehat{\boldsymbol{\theta}}_{t,k}^1$ and $\widehat{\boldsymbol{\theta}}_{t,k}^0$ are some estimators of ${\boldsymbol{\theta}}_{t,k}^1$ and ${\boldsymbol{\theta}}_{t,k}^0$, to estimate $\bar{\tau}_{it,k}$.
Note, however, that the error terms $e_{it,k}^1$ and $e_{it,k}^0$ are correlated with the regressors, since $\boldsymbol{Z}_{it}$ contains $\boldsymbol{Y}_i^\mathcal{P}$ which is correlated with $\boldsymbol{\varepsilon}_i^\mathcal{P}$.
% $\boldsymbol{\varepsilon}_i^\mathcal{P}=\boldsymbol{Y}_i^\mathcal{P}-\boldsymbol{\delta}_i^\mathcal{P}-\boldsymbol{\lambda}^\mathcal{P}\boldsymbol{\mu}_i$ is a function of $\boldsymbol{Z}_{it}$.
This renders the OLS estimators biased and inconsistent, which can be seen as a classical measurement errors in variables problem.\footnote{This is noted in \cite{ferman2019synthetic} as well, who also suggested using pre-treatment outcomes as instrumental variables to deal with the problem. Our method is also related to the quasi-differencing approach in \cite{holtz1988estimating} and the GMM approach in \cite{ahn2013panel}. While these studies focus on estimating the coefficients on the observed covariates, our focus is on estimating the individual treatment effects.} We thus use the remaining outcomes as instrumental variables for $\boldsymbol{Y}_i^\mathcal{P}$ to consistently estimate ${\boldsymbol{\theta}}_{t,k}^1$ and ${\boldsymbol{\theta}}_{t,k}^0$ in each period, which would then allow us to obtain asymptotically unbiased estimates for the individual treatment effects.\footnote{We may construct the vectors of regressors and instruments differently under alternative assumptions on the dependence structure of the idiosyncratic shocks. For example, if the idiosyncratic shocks are correlated across time but are independent across outcomes, then we can split different outcomes into regressors and instruments. This would be similar to using the characteristics of similar products \citep{berry1995automobile} or trading countries (see the Trade-weighted World Income instrument in \citealp{acemoglu2008income}) as instrumental variables. Incorporating more complex structures of the idiosyncratic shocks in the model is left for future research.}
Since the outcomes depend on about the same set of observed and unobserved individual characteristics, the remaining outcomes are strongly correlated with the outcomes included in $\boldsymbol{Y}_i^\mathcal{P}$. Additionally, given that the idiosyncratic shocks are independent across time, the remaining outcomes are not correlated with $e_{it,k}^1$ or $e_{it,k}^0$. Thus, both the relevance and exogeneity conditions are satisfied, and the remaining outcomes can serve as valid instrumental variables.
% Given that the outcomes depend on about the same set of observed and unobserved individual characteristics and the assumptions on the idiosyncratic shocks, it follows that the relevance and exogeneity conditions for the instrumental variables are satisfied.

Let $\boldsymbol{R}_{it}=[\boldsymbol{X}_{it}' \cdots \boldsymbol{X}_{is}' \cdots {\boldsymbol{Y}_i^{-\mathcal{P}}}']'$ be the $R\times1$ vector of instruments, where the $(KT-KP-1)\times1$ vector $\boldsymbol{Y}_i^{-\mathcal{P}}$ comprises the remaining pretreatment outcomes as well as the posttreatment outcomes other than $Y_{it,k}$.\footnote{In the special case of $T_1=1$ and $T_0=1$, we can include $K-1$ pretreatment outcomes as regressors, and use the posttreatment outcomes other than $Y_{it,k}$ as instruments so that $R\ge Z$.}
Stacking $\boldsymbol{Z}_{it}$, $\boldsymbol{R}_{it}$ and $\boldsymbol{Y}_{it,k}^0$ respectively over the $N_0$ untreated individuals, we obtain the $N_0\times Z$ matrix of regressors $\boldsymbol{Z}_{t}^0$, the $N_0\times R$ matrix of instruments $\boldsymbol{R}_{t}^0$ and the $N_0\times 1$ matrix of outcomes $\boldsymbol{Y}_{t,k}^0$ for the untreated individuals. We can obtain $\boldsymbol{Z}_{t}^1$, $\boldsymbol{R}_{t}^1$ and $\boldsymbol{Y}_{t,k}^1$ similarly for the $N_1$ treated individuals.
The GMM estimator for the individual treatment effect $\bar{\tau}_{it,k}$ can then be constructed as
\begin{align}\label{Ch3_eq_tau}
  \widehat{\tau}_{it,k} & =\boldsymbol{Z}_{it}'\left(\widehat{\boldsymbol{\theta}}_{t,k}^1-\widehat{\boldsymbol{\theta}}_{t,k}^0\right),
\end{align}
where
\begin{align}
  \widehat{\boldsymbol{\theta}}_{t,k}^1 & =\left({\boldsymbol{Z}_{t}^1}'\boldsymbol{R}_{t}^1\boldsymbol{W}^1{\boldsymbol{R}_{t}^1}'\boldsymbol{Z}_{t}^1\right)^{-1}{\boldsymbol{Z}_{t}^1}'\boldsymbol{R}_{t}^1\boldsymbol{W}^1{\boldsymbol{R}_{t}^1}'\boldsymbol{Y}_{t,k}^1, \label{Ch3_eq_theta1} \\
  \widehat{\boldsymbol{\theta}}_{t,k}^0 & =\left({\boldsymbol{Z}_{t}^0}'\boldsymbol{R}_{t}^0\boldsymbol{W}^0{\boldsymbol{R}_{t}^0}'\boldsymbol{Z}_{t}^0\right)^{-1}{\boldsymbol{Z}_{t}^0}'\boldsymbol{R}_{t}^0\boldsymbol{W}^0{\boldsymbol{R}_{t}^0}'\boldsymbol{Y}_{t,k}^0, \label{Ch3_eq_theta0}
\end{align}
with $\boldsymbol{W}^1$ and $\boldsymbol{W}^0$ being some $R\times R$ positive definite matrices.

\begin{remark}\label{Ch3_remark_two_step}
  \normalfont
  Using the residuals $\widehat{\boldsymbol{e}}_{t,k}^1=\boldsymbol{Y}_{t,k}^1-\boldsymbol{Z}_{t}^1\widehat{\boldsymbol{\theta}}_{t,k}^1$ and $\widehat{\boldsymbol{e}}_{t,k}^0=\boldsymbol{Y}_{t,k}^0-\boldsymbol{Z}_{t}^0\widehat{\boldsymbol{\theta}}_{t,k}^0$, we can further construct the two-step efficient GMM estimator by replacing $\boldsymbol{W}^1$ and $\boldsymbol{W}^0$ in equations \eqref{Ch3_eq_theta1} and \eqref{Ch3_eq_theta0} with $N_1\left({\boldsymbol{R}_{t}^1}'\boldsymbol{U}_{t}^1\boldsymbol{R}_{t}^1\right)^{-1}$ and $N_0\left({\boldsymbol{R}_{t}^0}'\boldsymbol{U}_{t}^0\boldsymbol{R}_{t}^0\right)^{-1}$, where $\boldsymbol{U}_{t}^1$ and $\boldsymbol{U}_{t}^0$ are diagonal matrices with the squared elements of $\widehat{\boldsymbol{e}}_{t,k}^1$ and $\widehat{\boldsymbol{e}}_{t,k}^0$ on the diagonals.
\end{remark}

\begin{remark}
  \normalfont
  One may also construct the estimators for the individual treatment effects using authentic predicted outcomes obtained from a leave-one-out procedure, where ${\boldsymbol{\theta}}_{t,k}^1$ and ${\boldsymbol{\theta}}_{t,k}^0$ are estimated for each individual using the sample that excludes that individual. This procedure may be computationally expensive though, as there are no simple linear expressions for the leave-one-out coefficients estimates and residuals as for those in linear regression \citep{hansen2021econometrics}.
\end{remark}

\medskip

The following result shows that the bias of the GMM estimator for the individual treatment effect in \eqref{Ch3_eq_tau} goes away as both the number of treated individuals and the number of untreated individuals become larger.

\begin{prop}\label{Ch3_prop_ITE_GMM}
  Under Assumptions \ref{Ch3_assume_predictor}-\ref{Ch3_assume_rank},
  $\mathbb{E}\left(\widehat{\tau}_{it,k}-\tau_{it,k}\mid \boldsymbol{H}_{it}=\boldsymbol{h}_{it}\right)\rightarrow0$ as $N_1,N_0\rightarrow\infty$.
\end{prop}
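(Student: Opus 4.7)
The plan has two pieces: (i) show that the leftover error terms in the potential outcome equations contribute zero conditional mean given $\boldsymbol{H}_{it}$, and (ii) prove that the two GMM estimators are consistent in a sense strong enough to carry a conditional expectation. From \eqref{Ch3_eq_IFE0_abb} and \eqref{Ch3_eq_IFE1_abb} I write
\begin{align*}
\widehat{\tau}_{it,k}-\tau_{it,k}=\boldsymbol{Z}_{it}'\bigl(\widehat{\boldsymbol{\theta}}_{t,k}^1-\boldsymbol{\theta}_{t,k}^1\bigr)-\boldsymbol{Z}_{it}'\bigl(\widehat{\boldsymbol{\theta}}_{t,k}^0-\boldsymbol{\theta}_{t,k}^0\bigr)-\bigl(e_{it,k}^1-e_{it,k}^0\bigr),
\end{align*}
where $e_{it,k}^d=\varepsilon_{it,k}^d-{\boldsymbol{\gamma}_{t,k}^d}'\boldsymbol{\varepsilon}_i^\mathcal{P}$. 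By Assumption \ref{Ch3_assume_error} part 1), every component of $\varepsilon_{it,k}^d$ and of $\boldsymbol{\varepsilon}_i^\mathcal{P}$ has zero mean conditional on $\boldsymbol{H}_{it}$, so the final bracket disappears in conditional expectation, leaving us to show that $\mathbb{E}\bigl[\boldsymbol{Z}_{it}'(\widehat{\boldsymbol{\theta}}_{t,k}^d-\boldsymbol{\theta}_{t,k}^d)\mid\boldsymbol{H}_{it}=\boldsymbol{h}_{it}\bigr]\to 0$ as $N_d\to\infty$ for $d\in\{0,1\}$.

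Substituting $\boldsymbol{Y}_{t,k}^d=\boldsymbol{Z}_t^d\boldsymbol{\theta}_{t,k}^d+\boldsymbol{e}_{t,k}^d$ into \eqref{Ch3_eq_theta1}--\eqref{Ch3_eq_theta0} gives the usual GMM sampling-error decomposition
\begin{align*}
\widehat{\boldsymbol{\theta}}_{t,k}^d-\boldsymbol{\theta}_{t,k}^d=\Bigl(N_d^{-2}{\boldsymbol{Z}_t^d}'\boldsymbol{R}_t^d\boldsymbol{W}^d{\boldsymbol{R}_t^d}'\boldsymbol{Z}_t^d\Bigr)^{-1}N_d^{-2}{\boldsymbol{Z}_t^d}'\boldsymbol{R}_t^d\boldsymbol{W}^d{\boldsymbol{R}_t^d}'\boldsymbol{e}_{t,k}^d.
\end{align*}
Assumption \ref{Ch3_assume_predictor} makes the rows of $(\boldsymbol{Z}_t^d,\boldsymbol{R}_t^d,\boldsymbol{e}_{t,k}^d)$ i.i.d.\ within group $d$, and the moment bounds there, together with Assumption \ref{Ch3_assume_error} part 4) propagated through \eqref{Ch3_eq_IFE_stack}, give each row a uniformly bounded fourth moment. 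The weak law of large numbers then delivers $N_d^{-1}{\boldsymbol{R}_t^d}'\boldsymbol{Z}_t^d\to_p\mathbb{E}^d[\boldsymbol{R}_{it}\boldsymbol{Z}_{it}']$ and $N_d^{-1}{\boldsymbol{R}_t^d}'\boldsymbol{e}_{t,k}^d\to_p\mathbb{E}^d[\boldsymbol{R}_{it}e_{it,k}^d]$, where $\mathbb{E}^d$ denotes the within-group distribution.

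Next I verify instrument validity. Exogeneity, $\mathbb{E}^d[\boldsymbol{R}_{it}e_{it,k}^d]=0$, follows from Assumption \ref{Ch3_assume_error}: part 1) annihilates the $\boldsymbol{X}$-block of $\boldsymbol{R}_{it}$ directly, while the outcomes collected in $\boldsymbol{Y}_i^{-\mathcal{P}}$ are driven by idiosyncratic shocks at time periods outside $\mathcal{P}$ and hence, by the cross-time independence in part 2), are uncorrelated with every component of $e_{it,k}^d$. Relevance, meaning full column rank of $\mathbb{E}^d[\boldsymbol{R}_{it}\boldsymbol{Z}_{it}']$, follows from Assumption \ref{Ch3_assume_rank} together with its natural analogue for the loadings that drive the outcomes appearing in $\boldsymbol{R}_{it}$: the common factor $\boldsymbol{\mu}_i$ couples the pretreatment outcomes in $\boldsymbol{Z}_{it}$ to the instrumenting outcomes in $\boldsymbol{R}_{it}$ with a covariance of rank $f$. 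The continuous mapping theorem then yields $\widehat{\boldsymbol{\theta}}_{t,k}^d\to_p\boldsymbol{\theta}_{t,k}^d$ for $d\in\{0,1\}$.

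The main obstacle I anticipate is upgrading this in-probability consistency to convergence of the conditional expectation, since the GMM estimator is a nonlinear function of sample averages that depends on every unit (including possibly unit $i$ itself through $\boldsymbol{\varepsilon}_i^\mathcal{P}$). I would handle this with a uniform integrability argument: the fourth-moment bounds in Assumptions \ref{Ch3_assume_predictor} and \ref{Ch3_assume_error}, combined with non-singularity of the inverted matrix on an event of high probability once $N_d$ exceeds a deterministic threshold, give an $L^2$ bound on $\widehat{\boldsymbol{\theta}}_{t,k}^d-\boldsymbol{\theta}_{t,k}^d$ uniform in large $N_d$. Vitali's convergence theorem then converts convergence in probability to convergence in $L^1$, and a conditioning-on-$\boldsymbol{H}_{it}$ argument, exploiting that $\boldsymbol{Z}_{it}$ depends on the rest of the sample only through the bounded-moment shocks $\boldsymbol{\varepsilon}_i^\mathcal{P}$, yields $\mathbb{E}[\boldsymbol{Z}_{it}'(\widehat{\boldsymbol{\theta}}_{t,k}^d-\boldsymbol{\theta}_{t,k}^d)\mid\boldsymbol{H}_{it}=\boldsymbol{h}_{it}]\to 0$, completing the proof.
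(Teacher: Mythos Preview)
Your proposal is correct and follows essentially the same route as the paper: the identical decomposition into $\boldsymbol{Z}_{it}'(\widehat{\boldsymbol{\theta}}_{t,k}^d-\boldsymbol{\theta}_{t,k}^d)$ terms and the error difference $e_{it,k}^0-e_{it,k}^1$, consistency of the GMM estimators via the WLLN and the continuous mapping theorem, and then a uniform-integrability argument to pass from convergence in probability to convergence of the conditional mean. The only cosmetic difference is that the paper invokes ``Lebesgue's Dominated Convergence Theorem'' for this last step whereas you (more precisely) name Vitali's theorem, and you spell out the instrument exogeneity/relevance checks a bit more explicitly than the paper does.
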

% result means that in large samples, the estimated ITEs center around the true ITEs.

\medskip

Once we have the estimates for the individual treatment effects, the average treatment effect $\tau_{t,k}=\mathbb{E}\left(\tau_{it,k}\right)$ can be conveniently estimated using the average of the estimated individual treatment effects $\widehat{\tau}_{t,k}=\frac{1}{N}\sum_{i=1}^N\widehat{\tau}_{it,k}$, which can be shown to be consistent.

\begin{prop}\label{Ch3_prop_ATE_GMM}
  Under Assumptions \ref{Ch3_assume_predictor}-\ref{Ch3_assume_rank},
  $\widehat{\tau}_{t,k}-\tau_{t,k}\overset{p}{\rightarrow}0$ as $N_0,N_1\rightarrow\infty$, and $\widehat{\tau}_{t,k}-\tau_{t,k}=O_p\left(N_1^{-1/2}\right)+O_p\left(N_0^{-1/2}\right)$.
\end{prop}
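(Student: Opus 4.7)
The plan is to decompose $\widehat{\tau}_{t,k}-\tau_{t,k}$ algebraically into pieces that can be controlled separately using standard GMM asymptotics for the coefficient estimators together with a weak law of large numbers and a central limit theorem applied to the sample average of $\boldsymbol{Z}_{it}$. Writing $\bar{\boldsymbol{Z}}_t=N^{-1}\sum_{i=1}^N\boldsymbol{Z}_{it}$, and using equations \eqref{Ch3_eq_IFE0_abb} and \eqref{Ch3_eq_IFE1_abb} together with $\mathbb{E}(e_{it,k}^d)=0$ implied by Assumption \ref{Ch3_assume_error}, we have $\tau_{t,k}=\mathbb{E}\bigl(\boldsymbol{Z}_{it}'(\boldsymbol{\theta}_{t,k}^1-\boldsymbol{\theta}_{t,k}^0)\bigr)$, and therefore
\begin{align*}
\widehat{\tau}_{t,k}-\tau_{t,k}= & \bar{\boldsymbol{Z}}_t'\bigl(\widehat{\boldsymbol{\theta}}_{t,k}^1-\boldsymbol{\theta}_{t,k}^1\bigr)-\bar{\boldsymbol{Z}}_t'\bigl(\widehat{\boldsymbol{\theta}}_{t,k}^0-\boldsymbol{\theta}_{t,k}^0\bigr)\\
& +\bigl[\bar{\boldsymbol{Z}}_t-\mathbb{E}(\boldsymbol{Z}_{it})\bigr]'\bigl(\boldsymbol{\theta}_{t,k}^1-\boldsymbol{\theta}_{t,k}^0\bigr).
\end{align*}

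The core step, and the one I expect to be the main obstacle, is upgrading the consistency of the two GMM estimators (already used in Proposition \ref{Ch3_prop_ITE_GMM}) to the rate $\widehat{\boldsymbol{\theta}}_{t,k}^d-\boldsymbol{\theta}_{t,k}^d=O_p(N_d^{-1/2})$ for $d\in\{0,1\}$. This requires verifying, within each of the two groups, the standard ingredients of overidentified GMM asymptotic normality. First, the moment condition $\mathbb{E}(\boldsymbol{R}_{it}e_{it,k}^d)=0$ holds because $\boldsymbol{R}_{it}$ collects observed covariates and outcomes from periods outside $\mathcal{P}$, which by Assumption \ref{Ch3_assume_error} are independent of $\boldsymbol{\varepsilon}_i^{\mathcal{P}}$ and hence of $e_{it,k}^d$. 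Second, the relevance matrix $\mathbb{E}(\boldsymbol{R}_{it}\boldsymbol{Z}_{it}')$ has full column rank under Assumption \ref{Ch3_assume_rank} combined with the moment bounds in Assumption \ref{Ch3_assume_predictor}. Third, the weight matrix, whether the preliminary choice $\boldsymbol{W}^d$ or the two-step update described in Remark \ref{Ch3_remark_two_step}, converges in probability to a positive definite limit. Applying a within-group weak law of large numbers to $N_d^{-1}\boldsymbol{R}_t^{d\prime}\boldsymbol{Z}_t^d$ and a within-group central limit theorem to $N_d^{-1/2}\boldsymbol{R}_t^{d\prime}\boldsymbol{e}_{t,k}^d$ (both justified by the bounded fourth moments in Assumptions \ref{Ch3_assume_predictor} and \ref{Ch3_assume_error}), and linearising the GMM first-order condition around $\boldsymbol{\theta}_{t,k}^d$, delivers the claimed $N_d^{-1/2}$ rate.

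The remaining two pieces are routine. The same moment bounds and the weak law of large numbers give $\bar{\boldsymbol{Z}}_t=O_p(1)$, while splitting the sample average as $\bar{\boldsymbol{Z}}_t=(N_1/N)\bar{\boldsymbol{Z}}_t^1+(N_0/N)\bar{\boldsymbol{Z}}_t^0$ and invoking a central limit theorem within each group yields $\bar{\boldsymbol{Z}}_t-\mathbb{E}(\boldsymbol{Z}_{it})=O_p(N_1^{-1/2})+O_p(N_0^{-1/2})$. Substituting these bounds into the three terms of the decomposition, the first term is $O_p(1)\cdot O_p(N_1^{-1/2})$, the second is $O_p(N_0^{-1/2})$, and the third is $O_p(N_1^{-1/2})+O_p(N_0^{-1/2})$ because $\boldsymbol{\theta}_{t,k}^1-\boldsymbol{\theta}_{t,k}^0$ is a constant vector. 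Combining the three bounds gives $\widehat{\tau}_{t,k}-\tau_{t,k}=O_p(N_1^{-1/2})+O_p(N_0^{-1/2})$, which implies the convergence in probability claimed in the proposition as $N_0,N_1\to\infty$.
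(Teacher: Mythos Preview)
Your proposal is correct and follows essentially the same route as the paper: both decompose $\widehat{\tau}_{t,k}-\tau_{t,k}$ into the two GMM estimation-error terms $\bar{\boldsymbol{Z}}_t'(\widehat{\boldsymbol{\theta}}_{t,k}^d-\boldsymbol{\theta}_{t,k}^d)$ plus a central-limit-theorem term, and then invoke the $O_p(N_d^{-1/2})$ rate for the GMM estimators together with $\bar{\boldsymbol{Z}}_t=O_p(1)$. The only cosmetic difference is that the paper routes the argument through the intermediate sample average $\frac{1}{N}\sum_i\tau_{it,k}$ (separating the residual average $\frac{1}{N}\sum_i(e_{it,k}^1-e_{it,k}^0)$ from the step $\frac{1}{N}\sum_i\tau_{it,k}-\tau_{t,k}$), whereas you collapse these two pieces directly into $[\bar{\boldsymbol{Z}}_t-\mathbb{E}(\boldsymbol{Z}_{it})]'(\boldsymbol{\theta}_{t,k}^1-\boldsymbol{\theta}_{t,k}^0)$; the two arrangements are algebraically identical, and your write-up is in fact more explicit than the paper's about why the GMM $\sqrt{N_d}$ rate holds.
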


\subsection{Model Selection}\label{Ch3_sec_ms}

To satisfy Assumption \ref{Ch3_assume_rank}, we need the number of pretreatment outcomes that we include as regressors in the model to be at least as large as $f$.
Including more pretreatment outcomes may increase the variance of the estimator by increasing the variances of $\widehat{\boldsymbol{\theta}}_{t,k}^1$ and $\widehat{\boldsymbol{\theta}}_{t,k}^0$, but may also reduce the variance of the estimator when the sample is large and the variances of $\widehat{\boldsymbol{\theta}}_{t,k}^1$ and $\widehat{\boldsymbol{\theta}}_{t,k}^0$ are small, since
\begin{align}\label{Ch3_eq_error}
  \left(\boldsymbol{\gamma}_{t,k}^1-\boldsymbol{\gamma}_{t,k}^0\right)'\boldsymbol{\varepsilon}_i^\mathcal{P}=\frac{1}{KP}\sum_{q\in\mathcal{K}}\sum_{s\in\mathcal{P}}\left(\boldsymbol{\lambda}^1_{t,k}-\boldsymbol{\lambda}^0_{t,k}\right)'\left(\frac{1}{KP}\sum_{l\in\mathcal{K}}\sum_{n\in\mathcal{P}}\boldsymbol{\lambda}_{n,l}^0{\boldsymbol{\lambda}_{n,l}^0}'\right)^{-1}\boldsymbol{\lambda}_{s,q}^0\varepsilon_{is,q}^0
\end{align}
in the prediction error converges in probability to 0 as $KP$ grows.\footnote{Consistency of the individual treatment effect estimator may also be shown by allowing both $N$ and $KP$ to grow, with restrictions on the relative growth rate, e.g., $\frac{KP}{\min\left(\sqrt{N_1},\sqrt{N_0}\right)}\rightarrow0$. We do not pursue this path in this study, as the number of pretreatment outcomes in empirical microeconomics that we focus on is usually not large.}

To select the number of pretreatment outcomes to include in the model, we follow a model selection procedure similar to that in \cite{hsiao2012panel}, where for each usable number of pretreatment outcomes, we construct many different models by including a random subset of the pretreatment outcomes as regressors and the remaining outcomes as instruments. We then estimate the models using GMM and obtain the leave-one-out prediction errors for all or a subsample of the individuals.
The best set of pretreatment outcomes is chosen as the one that minimises the mean squared leave-one-out prediction error.\footnote{An alternative way to select the best set of pretreatment outcomes is to use information criteria such as GMM-BIC and GMM-AIC \citep{andrews1999consistent}. To avoid the potential problem of post-selection inference, we may also randomly split the sample into two parts, where we select the best model on one part, and conduct inference on the other.}

In addition to the models using only a subset of the pretreatment outcomes, we also consider averaging different models that use the same number of pretreatment outcomes. Since the estimators constructed using only a subset of the pretreatment outcomes are asymptotically unbiased, as long as the number of pretreatment outcomes is larger than $f$, this property is passed on to the averaged estimator. The averaged estimator may also be more efficient as it uses more information in the sample and reduces uncertainty caused by a small number of sample splits.\footnote{We stick with simple averaging in this paper. More flexible averaging scheme, e.g., with larger weights on those with smaller out of sample prediction errors, would be an interesting direction for future research.}
The leave-one-out prediction errors are also averaged over the models, and the best number of pretreatment outcomes to be used for the averaged estimator is similarly determined by minimising the mean squared leave-one-out prediction error.

\subsection{Related methods}

\subsubsection{Linear conditional mean}

An alternative approach to estimating the treatment effects is to follow \cite{hsiao2012panel} and assume that
\begin{align}
  \mathbb{E}\left(\boldsymbol{\varepsilon}_i^\mathcal{P}\mid\boldsymbol{Z}_{it}\right)=\boldsymbol{C}'\boldsymbol{Z}_{it},
\end{align}
where $\boldsymbol{C}=\mathbb{E}\left(\boldsymbol{Z}_{it}\boldsymbol{Z}_{it}'\right)^{-1}\mathbb{E}\left(\boldsymbol{Z}_{it}{\boldsymbol{\varepsilon}_i^\mathcal{P}}'\right)$ is $Z\times KT_0$.\footnote{This assumption holds in special cases, e.g., when the unobserved predictors and the idiosyncratic shocks all follow the normal distribution \citep{li2017estimation}. In more general cases, this assumption may be considered to hold approximately.}
We can then separate the error term into a part correlated with the regressors and a part that has zero conditional mean, and rewrite the untreated potential outcome $Y_{it,k}^0$ as
\begin{align*}
  Y_{it,k}^0 & =\mathbb{E}\left(Y_{it,k}^0\mid \boldsymbol{Z}_{it}\right)+u_{it,k}^0   \nonumber                                        \\
             & =\left({\boldsymbol{\theta}_{t,k}^0}'-{\boldsymbol{\gamma}_{t,k}^0}'\boldsymbol{C}'\right)\boldsymbol{Z}_{it}+u_{it,k}^0 \\
             & =\boldsymbol{Z}_{it}'\boldsymbol{\theta}_{t,k}^{*0}+u_{it,k}^0, \numberthis
\end{align*}
where $u_{it,k}^0=\varepsilon_{it,k}^0-{\boldsymbol{\gamma}_{t,k}^0}'\boldsymbol{\varepsilon}_i^\mathcal{P}+{\boldsymbol{\gamma}_{t,k}^0}'\boldsymbol{C}'\boldsymbol{Z}_{it}$.
Similarly, the treated potential outcome $Y_{it,k}^1$ can be rewritten as
\begin{align}
  Y_{it,k}^1 & =\boldsymbol{Z}_{it}'\boldsymbol{\theta}_{t,k}^{*1}+u_{it,k}^1,
\end{align}
where $\boldsymbol{\theta}_{t,k}^{*1}={\boldsymbol{\theta}_{t,k}^1}-\boldsymbol{C}\boldsymbol{\gamma}_{t,k}^1$, and $u_{it,k}^1=\varepsilon_{it,k}^1-{\boldsymbol{\gamma}_{t,k}^1}'\boldsymbol{\varepsilon}_i^\mathcal{P}+{\boldsymbol{\gamma}_{t,k}^1}'\boldsymbol{C}'\boldsymbol{Z}_{it}$.

Since $\mathbb{E}(u_{it,k}^1\mid\boldsymbol{Z}_{it})=\mathbb{E}[e_{it,k}^1-\mathbb{E}(e_{it,k}^1\mid\boldsymbol{Z}_{it})\mid \boldsymbol{Z}_{it}]=0$ and $\mathbb{E}(u_{it,k}^0\mid\boldsymbol{Z}_{it})=0$, it is straightforward to show that the least squares estimators $\widehat{\boldsymbol{\theta}}_{t,k}^{*1}=({\boldsymbol{Z}_{t}^1}'\boldsymbol{Z}_{t}^1)^{-1}{\boldsymbol{Z}_{t}^1}'\boldsymbol{Y}_{t,k}^1$ and $\widehat{\boldsymbol{\theta}}_{t,k}^{*0}=({\boldsymbol{Z}_{t}^0}'\boldsymbol{Z}_{t}^0)^{-1}{\boldsymbol{Z}_{t}^0}'\boldsymbol{Y}_{t,k}^0$ are the unbiased estimators of $\boldsymbol{\theta}_{t,k}^{*1}$ and $\boldsymbol{\theta}_{t,k}^{*0}$ respectively.\footnote{The linear conditional mean assumption also implies that the unconfoundedness assumption is satisfied, as $\mathbb{E}(Y_{it,k}^0\mid\boldsymbol{Z}_{it},D_{it}=1)=\mathbb{E}(Y_{it,k}^0\mid\boldsymbol{Z}_{it},D_{it}=0)$ and $\mathbb{E}(Y_{it,k}^1\mid\boldsymbol{Z}_{it},D_{it}=1)=\mathbb{E}(Y_{it,k}^1\mid\boldsymbol{Z}_{it},D_{it}=0)$.}

We can then construct an estimator as
\begin{align}
  \tilde{\tau}_{it,k} & =\boldsymbol{Z}_{it}'\left(\widehat{\boldsymbol{\theta}}_{t,k}^{*1}-\widehat{\boldsymbol{\theta}}_{t,k}^{*0}\right),
\end{align}
which is an unbiased estimator for the average treatment effect for individuals with the same values of $\boldsymbol{Z}_{it}$, or the conditional average treatment effect.
It follows that the average of the conditional average treatment effects estimators $\tilde{\tau}_{t,k}=\frac{1}{N}\sum_{i=1}^N\tilde{\tau}_{it,k}$ is an unbiased estimator for the average treatment effect $\tau_{t,k}$.
In addition, it can also be shown that $\tilde{\tau}_{t,k}$ is a consistent estimator without imposing the linear conditional mean assumption \citep{li2017estimation}.

\begin{prop}\label{Ch3_prop_OLS}
  Under Assumptions \ref{Ch3_assume_predictor}-\ref{Ch3_assume_rank},
  \begin{enumerate}[label=(\roman*)]
    \item if $\mathbb{E}\left(\boldsymbol{\varepsilon}_i^\mathcal{P}\mid\boldsymbol{Z}_{it}\right)=\boldsymbol{C}'\boldsymbol{Z}_{it}$, then $\mathbb{E}\left(\tilde{\tau}_{it,k}-\tau_{it,k}\mid \boldsymbol{Z}_{it}=\boldsymbol{z}_{it}\right)=0$ and $\mathbb{E}\left(\tilde{\tau}_{t,k}-\tau_{t,k}\right)=0$;
    \item $\tilde{\tau}_{t,k}-\tau_{t,k}=O_p\left(N_1^{-1/2}\right)+O_p\left(N_0^{-1/2}\right)$.
  \end{enumerate}
\end{prop}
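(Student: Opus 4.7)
The plan is to handle the two parts separately, leveraging the linearity of $\tilde\tau$ in the OLS coefficient estimators.

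For part (i), I would start from the rewriting in the text, $Y_{it,k}^d = \boldsymbol{Z}_{it}'\boldsymbol{\theta}_{t,k}^{*d} + u_{it,k}^d$, which under LCM satisfies $\mathbb{E}(u_{it,k}^d\mid\boldsymbol{Z}_{it})=0$ as already derived. By the cross-sectional independence in Assumptions~\ref{Ch3_assume_predictor} and~\ref{Ch3_assume_error}, this extends to $\mathbb{E}(u_{it,k}^d\mid\boldsymbol{Z}_t^d)=0$ for each $i$ in the $d$-group, so standard OLS algebra gives $\mathbb{E}(\widehat{\boldsymbol{\theta}}_{t,k}^{*d}\mid\boldsymbol{Z}_t^d) = \boldsymbol{\theta}_{t,k}^{*d}$. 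Conditioning down to $\boldsymbol{Z}_{it}$ alone (using independence across individuals for the group not containing $i$, and iterated expectations for the one that does) preserves the equality, so $\mathbb{E}(\tilde{\tau}_{it,k}\mid\boldsymbol{Z}_{it}=\boldsymbol{z}_{it}) = \boldsymbol{z}_{it}'(\boldsymbol{\theta}_{t,k}^{*1}-\boldsymbol{\theta}_{t,k}^{*0})$. Under LCM the same linear expression also equals $\mathbb{E}(\tau_{it,k}\mid\boldsymbol{Z}_{it}=\boldsymbol{z}_{it})$ (since both potential outcomes have conditional means linear in $\boldsymbol{Z}_{it}$ with coefficients $\boldsymbol{\theta}_{t,k}^{*d}$), which delivers the first equality; the unconditional claim then follows by the tower property after averaging over $i$.

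For part (ii), I would write $\bar{\boldsymbol{Z}}_t = N^{-1}\sum_i\boldsymbol{Z}_{it}$, introduce the population linear-projection limits $\tilde{\boldsymbol{\theta}}_{t,k}^{*d}$ of $\widehat{\boldsymbol{\theta}}_{t,k}^{*d}$ within each group, and decompose
\begin{equation*}
\tilde{\tau}_{t,k}-\tau_{t,k} = \bar{\boldsymbol{Z}}_t'\bigl(\widehat{\boldsymbol{\theta}}_{t,k}^{*1}-\tilde{\boldsymbol{\theta}}_{t,k}^{*1}\bigr) - \bar{\boldsymbol{Z}}_t'\bigl(\widehat{\boldsymbol{\theta}}_{t,k}^{*0}-\tilde{\boldsymbol{\theta}}_{t,k}^{*0}\bigr) + \bigl[\bar{\boldsymbol{Z}}_t'\bigl(\tilde{\boldsymbol{\theta}}_{t,k}^{*1}-\tilde{\boldsymbol{\theta}}_{t,k}^{*0}\bigr) - \tau_{t,k}\bigr].
\end{equation*}
The fourth-moment bounds in Assumptions~\ref{Ch3_assume_predictor}--\ref{Ch3_assume_error}, together with the rank condition in Assumption~\ref{Ch3_assume_rank}, give $\bar{\boldsymbol{Z}}_t = O_p(1)$ and, via a standard CLT for OLS, $\widehat{\boldsymbol{\theta}}_{t,k}^{*d}-\tilde{\boldsymbol{\theta}}_{t,k}^{*d} = O_p(N_d^{-1/2})$, which handles the first two terms. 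The remaining bracket is $O_p(N^{-1/2})$ by a WLLN/CLT applied to $\bar{\boldsymbol{Z}}_t$, provided one has the population identity $\mathbb{E}(\boldsymbol{Z}_{it})'(\tilde{\boldsymbol{\theta}}_{t,k}^{*1}-\tilde{\boldsymbol{\theta}}_{t,k}^{*0}) = \tau_{t,k}$; since $\min(N_1,N_0)\le N$, the three $O_p$ terms combine into the claimed $O_p(N_1^{-1/2})+O_p(N_0^{-1/2})$ rate.

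The main obstacle I expect is the last population-mean identity in the absence of LCM: when LCM holds it is immediate because $\tilde{\boldsymbol{\theta}}_{t,k}^{*d}=\boldsymbol{\theta}_{t,k}^{*d}$ and the conditional means match, but in general one has to invoke an argument in the spirit of \cite{li2017estimation}, exploiting the structural IFE form together with the orthogonality condition for the linear projection (and the fact that $\boldsymbol{Z}_{it}$ contains a constant, so that $\mathbb{E}(Y_{it,k}^d\mid D_{it}=d) = \mathbb{E}(\boldsymbol{Z}_{it}\mid D_{it}=d)'\tilde{\boldsymbol{\theta}}_{t,k}^{*d}$) to show that the within-group projections still yield the correct unconditional mean treatment effect when averaged over the full sample. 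Once that identification step is in place, the rest is routine WLLN/CLT book-keeping.
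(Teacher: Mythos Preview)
Your treatment of part~(i) matches the paper's exactly: use the linear conditional mean to obtain $\mathbb{E}(u_{it,k}^d\mid\boldsymbol{Z}_{it})=0$, deduce OLS unbiasedness conditional on the full design $\boldsymbol{Z}_t^d$, and condition down to a single $\boldsymbol{Z}_{it}$ via independence and iterated expectations; the unconditional statement then follows by the tower property.

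For part~(ii) the two arguments reach the same destination via slightly different decompositions. The paper does \emph{not} introduce your projection limits $\tilde{\boldsymbol{\theta}}_{t,k}^{*d}$; it centres instead on the \emph{structural} coefficients $\boldsymbol{\theta}_{t,k}^d$ and writes
\[
\frac{1}{N}\sum_{i}(\tilde\tau_{it,k}-\tau_{it,k})
=\frac{1}{N}\sum_{i}\Bigl[\boldsymbol{Z}_{it}'\bigl(\widehat{\boldsymbol{\theta}}_{t,k}^{*1}-\boldsymbol{\theta}_{t,k}^{1}\bigr)+{\boldsymbol{\gamma}_{t,k}^1}'\boldsymbol{\varepsilon}_i^{\mathcal{P}}-\varepsilon_{it,k}^1\Bigr]-(\text{symmetric in }d=0),
\]
and then expands $\widehat{\boldsymbol{\theta}}_{t,k}^{*d}-\boldsymbol{\theta}_{t,k}^{d}$ explicitly. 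This splits each bracket into (a) a piece driven by the orthogonal shock $\varepsilon_{jt,k}^d$, which is immediately $O_p(N_d^{-1/2})$, and (b) a residual
\[
\tilde\Delta_i^d={\boldsymbol{\gamma}_{t,k}^d}'\boldsymbol{\varepsilon}_i^{\mathcal{P}}-\varepsilon_{it,k}^d-\boldsymbol{Z}_{it}'\Bigl(\textstyle\sum_{j\in\mathcal{G}_d}\boldsymbol{Z}_{jt}\boldsymbol{Z}_{jt}'\Bigr)^{-1}\sum_{j\in\mathcal{G}_d}\boldsymbol{Z}_{jt}{\boldsymbol{\gamma}_{t,k}^d}'\boldsymbol{\varepsilon}_j^{\mathcal{P}},
\]
whose population analogue $\Delta_i^d$ satisfies $\mathbb{E}(\boldsymbol{Z}_{it}\Delta_i^d)=0$ by construction of the linear projection, and hence $\mathbb{E}(\Delta_i^d)=0$ because $\boldsymbol{Z}_{it}$ contains a constant; this is exactly the \cite{li2017estimation} step you anticipated. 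Your ``population-mean identity'' $\mathbb{E}(\boldsymbol{Z}_{it})'(\tilde{\boldsymbol{\theta}}_{t,k}^{*1}-\tilde{\boldsymbol{\theta}}_{t,k}^{*0})=\tau_{t,k}$ is the same fact repackaged: writing $\tilde{\boldsymbol{\theta}}_{t,k}^{*d}=\boldsymbol{\theta}_{t,k}^d+A_d^{-1}\mathbb{E}(\boldsymbol{Z}_{it}e_{it,k}^d)$, the identity reduces to $\mathbb{E}(\Delta_i^d)=0$. The paper's route has the minor advantage that the orthogonal $\varepsilon_{jt,k}^d$ contribution is separated out and handled in one line, whereas in your organisation it is absorbed into the OLS CLT step; conversely, your formulation makes the role of the identification identity more transparent. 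Either way the key lever is the same: the constant in $\boldsymbol{Z}_{it}$ forces the projection residual to have mean zero, which is precisely the obstacle you flagged.
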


\begin{remark}
  \normalfont
  Note that $\mathbb{E}\left(\tau_{it,k}\mid \boldsymbol{Z}_{it}=\boldsymbol{z}_{it}\right)$ is the average treatment effect for individuals with $\boldsymbol{Z}_{it}=\boldsymbol{z}_{it}$, or the conditional average treatment effect, whereas the individual treatment effect is $\mathbb{E}\left(\tau_{it,k}\mid \boldsymbol{H}_{it}=\boldsymbol{h}_{it}\right)$ as given in \eqref{Ch3_eq_ITE1}. The two are generally not the same since $\boldsymbol{C}'\boldsymbol{Z}_{it}\neq 0$.
\end{remark}

\subsubsection{Interactive fixed effects model}

Instead of replacing the unobserved confounders with the observed pretreatment outcomes, \cite{bai2009panel} models the unobserved fixed effects directly by iterating between estimating the coefficients on the observed covariates and estimating the unobserved factors and factor loadings using the principal component analysis, given some initial values.
This approach allows more general structures in the error terms, but requires both $N$ and $T$ to be large, and is also more restrictive on the model specification: the observed covariates need to be time-varying, while the coefficients are assumed constant over time.
\cite{xu2017generalized} adapts this method to the potential outcomes framework to estimate the average treatment effects on the treated, assuming that the untreated potential outcomes for both the treated and untreated units follow the interactive fixed effects model, and proposes a cross-validation procedure to choose the number of unobserved factors and a parametric bootstrap procedure for inference.

This approach has the desired feature of being less computationally expensive compared with repeated pretreatment set splitting and averaging, and is potentially more efficient compared with using only the best set of pretreatment outcomes and discarding the remaining information when all outcomes are related. However, its potential to be adapted to our settings is limited by the restrictions discussed above. In particular, we may assume the coefficients to be constant over time, but it would be unrealistic to assume that they are the same across different outcomes, if we were to use multiple related outcomes.

Another closely related study is \cite{athey2021matrix}, which generalises the results from the matrix completion literature in computer science to impute the missing elements of the untreated potential outcome matrix for the treated units in the posttreatment periods, where the matrix is assumed to have a low rank structure, similar to that of the interactive fixed effects model. The bias of the estimator is shown to have an upper bound that goes to 0 as both $N$ and $T$ grow. This method allows staggered adoption of the treatment, i.e., the treated units receive the treatment at different time periods.

\subsubsection{Synthetic control method}

\cite{abadie2010synthetic} estimate the treatment effect on a treated unit by predicting its untreated potential outcome using a synthetic control constructed as a weighted average of the control units.
The synthetic control method applies to cases where the pretreatment characteristics of the treated unit can be closely approximated by the synthetic control constructed using a small number of control units over an extended period of time before the treatment, which may not generally hold.
In terms of implementation, the objective function for the synthetic control method is similar to that of the linear regression approach in \cite{hsiao2012panel}. However, the weights on the control units in the synthetic control method are restricted to be nonnegative to avoid extrapolation. This reduces the risk of overfitting, but may also limit its applicability by making it difficult to find a set of weights that satisfy the restrictions.

\subsection{Inference}\label{Ch3_sec_ci}

\sloppy To measure the conditional variance of the individual treatment effect estimator, $\text{Var}\left(\widehat{\tau}_{it,k}\mid \boldsymbol{H},\boldsymbol{D}\right)$, where $\boldsymbol{H}$ is the matrix of observed covariates and unobserved individual characteristics and $\boldsymbol{D}$ is the matrix of the treatment status for all individuals and all time periods in the sample, we follow \cite{xu2017generalized} and employ a parametric bootstrap procedure.
% by fixing the fitted values of the outcomes and resampling the residuals.

First, we apply our method to all outcomes in all periods to obtain $\widehat{Y}_{it,k}^1$ and $\widehat{e}_{it,k}^1$ for the treated individuals in the posttreatment periods, and $\widehat{Y}_{it,k}^0$ and $\widehat{e}_{it,k}^0$ for the untreated individuals in the posttreatment periods and for all individuals in the pretreatment periods.
Note that the residuals $\widehat{e}_{it,k}^1$ and $\widehat{e}_{it,k}^0$ are estimates for $\varepsilon_{it,k}^1-{\boldsymbol{\gamma}_{t,k}^1}'\boldsymbol{\varepsilon}_i^\mathcal{P}$ and $\varepsilon_{it,k}^0-{\boldsymbol{\gamma}_{t,k}^0}'\boldsymbol{\varepsilon}_i^\mathcal{P}$, respectively, rather than the idiosyncratic shocks in the original model, $\varepsilon_{it,k}^1$ and $\varepsilon_{it,k}^0$. Thus, the variance of the individual treatment effect estimator tends to be overestimated using the parametric bootstrap by resampling these residuals, especially when the number of pretreatment outcomes is small.\footnote{See the discussion on equation \eqref{Ch3_eq_error}.} Correcting for this bias would be a necessary step for future research.

These fitted values of the outcomes can be stacked into a $TK\times 1$ vector $\widehat{\boldsymbol{Y}}_i$ for each individual, where $\widehat{\boldsymbol{Y}}_i$ for $i\in\mathcal{T}$ contains $\widehat{Y}_{it,k}^1$ in the posttreatment periods and $\widehat{Y}_{it,k}^0$ in the pretreatment periods, and $\widehat{\boldsymbol{Y}}_i$ for $i\in\mathcal{C}$ contains $\widehat{Y}_{it,k}^0$ in all periods.
The $TK\times 1$ vector of residuals $\widehat{\boldsymbol{e}}_i$ can be obtained similarly.

We then start bootstrapping for $B$ rounds:
\begin{enumerate}
  \item In round $b\in\{1,\dots,B\}$, generate a bootstrapped sample as
        \begin{align*}
          \boldsymbol{Y}_i^{(b)} & = \widehat{\boldsymbol{Y}}_i + \widehat{\boldsymbol{e}}_i^{(b)}, \enspace \text{for all} \enspace i,
        \end{align*}
        where $\widehat{\boldsymbol{e}}_i^{(b)}$ is randomly drawn from $\{\widehat{\boldsymbol{e}}_i\}_{i\in\mathcal{T}}$ for $i\in\mathcal{T}$, and from $\{\widehat{\boldsymbol{e}}_i\}_{i\in\mathcal{C}}$ for $i\in\mathcal{C}$.\footnote{Since the entire series of residuals over the $T$ periods and $K$ outcomes are resampled, correlation and heteroskedasticity across time and outcomes are preserved \citep{xu2017generalized}.}
        % , as the variances of the estimated idiosyncratic shocks may be different for the treated  and untreated groups due to different group sizes.
  \item Construct $\widehat{\tau}_{it,k}^{(b)}$ for each $i$ using the above bootstrapped sample.
\end{enumerate}

The variance for the individual treatment effect estimator is computed using the bootstrap estimates as
\begin{align*}
  \text{Var}\left(\widehat{\tau}_{it,k}\mid \boldsymbol{H},\boldsymbol{D}\right)=\frac{1}{B}\sum_{b=1}^B\left(\widehat{\tau}_{it,k}^{(b)}-\frac{1}{B}\sum_{a=1}^B\widehat{\tau}_{it,k}^{(a)}\right),\enspace i=1,\dots,N,
\end{align*}
and the $100(1-\alpha)\%$ confidence intervals for $\bar{\tau}_{it,k}$, $i=1,\dots,N$ can be constructed as
\begin{equation*}
  \left[ \widehat{\tau}_{it,k}^{\left[\frac{\alpha}{2}B\right]}\,, \;\widehat{\tau}_{it,k}^{\left[(1-\frac{\alpha}{2})B\right]} \right],
\end{equation*}
where the superscript denotes the index of the bootstrap estimates in ascending order.
Alternatively, we can use a normal approximation and construct the confidence intervals as
\begin{equation*}
  \left[ \widehat{\tau}_{it,k}+\Phi^{-1}\left(\frac{\alpha}{2}\right)\widehat{\sigma}_{it,k}\,, \;\widehat{\tau}_{it,k}+\Phi^{-1}\left(1-\frac{\alpha}{2}\right)\widehat{\sigma}_{it,k} \right],
\end{equation*}
where $\Phi\left(\cdot\right)$ is the cumulative distribution function for the standard normal distribution, and $\widehat{\sigma}_{it,k}=\sqrt{\text{Var}\left(\widehat{\tau}_{it,k}\mid \boldsymbol{H},\boldsymbol{D}\right)}$.

The variance for the average treatment effect estimator $\widehat{\tau}_{t,k}=\frac{1}{N}\sum_{i=1}^N\widehat{\tau}_{it,k}$ and the confidence interval for the average treatment effect $\tau_{t,k}$ can be obtained in similar manners using the bootstrap estimates $\widehat{\tau}_{t,k}^{(b)}=\frac{1}{N}\sum_{i=1}^N\widehat{\tau}_{it,k}^{(b)}$, $b=1,\dots,B$.

\section{Monte Carlo Simulations}\label{Ch3_sec_sim}

In this section, we conduct Monte Carlo simulations to assess the performance of our estimator in small samples, and compare it with related methods in relevant settings.
The number of posttreatment period $T_1$ is fixed at 1, and the number of related outcomes $K$ is fixed at 5 in all settings.

The untreated potential outcomes are generated from
\begin{equation}
  Y_{it,k}^0=\boldsymbol{X}_{it}'\boldsymbol{\beta}_{t,k}^0+\boldsymbol{\mu}_i'\boldsymbol{\lambda}_{t,k}^0+\varepsilon_{it,k}^0,\enspace k\in\mathcal{K},
\end{equation}
where $\boldsymbol{X}_{it}$ contains 2 observed covariates, and $\boldsymbol{\mu}_{i}$ contains 2 unobserved individual characteristics as well as the constant 1.
The 2 observed covariates are i.i.d. $N(0,1)$ in period 1, and then follow an AR(1) process, $\boldsymbol{X}_{it}=0.9\boldsymbol{X}_{i,t-1}+\xi_{it}$, where $\xi_{it}$ are i.i.d. $N(0,\sqrt{1-0.9^2})$, so that the observed covariates are correlated across time and the variances stay 1.
The 2 unobserved individual characteristics are also i.i.d. $N(0,1)$.
The coefficients $\boldsymbol{\beta}_{t,k}^0$ and $\boldsymbol{\lambda}_{t,k}^0$ are i.i.d. $N(\omega_k,1)$ with $\omega_k\sim N(1,1)$, for $k\in\mathcal{K}$, so that the means of the coefficients differ across outcomes, and the idiosyncratic shocks $\varepsilon_{it,k}^0$ are i.i.d. $N(0,1)$.

The individual treatment effect in the posttreatment period $\bar{\tau}_{iT_0+1,k}$ is a deterministic function of $\boldsymbol{X}_{it}$ and $\boldsymbol{\mu}_i$ with the coefficients being i.i.d. $N(0.5,0.5)$, for $k\in\mathcal{K}$.
And the observed outcomes $Y_{it,k}$, $k\in\mathcal{K}$ are equal to $Y_{it,k}^0-\varepsilon_{it,k}^0+\bar{\tau}_{iT_0+1,k}+\varepsilon_{it,k}^1$, where $\varepsilon_{it,k}^1$ are i.i.d. $N(0,1)$, for the treated individuals in the posttreatment period, and $Y_{it,k}^0$ otherwise.
$\boldsymbol{X}_{it}$ and $\boldsymbol{\mu}_i$ as well as their coefficients for the untreated potential outcomes and the treatment effects are drawn 5 times, and for each set of $\{\boldsymbol{X}_{it},\boldsymbol{\mu}_i\}$ and their coefficients drawn, $\varepsilon_{it,k}^0$ and $\varepsilon_{it,k}^1$ are drawn 1000 times, which allows us to compute the bias and variance of the estimator conditional on the observed covariates and the unobserved individual characteristics.

\sloppy To measure the performances of the estimators, we compute the biases and standard deviations for the estimates of the individual treatment effects and the average treatment effect for outcome $K$ in the posttreatment period.
Specifically, the bias of the individual treatment effect estimator $\widehat{\tau}_{iT_0+1,K}$ is measured by $\frac{1}{N}\sum_{i=1}^N\frac{1}{5}\sum_{d=1}^5\left\vert\mathbb{E}\left(\widehat{\tau}_{iT_0+1,K}^{(d,s)}\right)-\bar{\tau}_{iT_0+1,K}^{(d)}\right\vert$, where the superscript $d$ denotes the $d$th draw of $\{\boldsymbol{X}_{it},\boldsymbol{\mu}_i\}$ and $s$ denotes the $s$th draw of $\varepsilon_{it,k}^0$ and $\varepsilon_{it,k}^1$, and the standard deviation is constructed as $\frac{1}{N}\sum_{i=1}^N\frac{1}{5}\sum_{d=1}^5\sqrt{\mathbb{E}\left(\widehat{\tau}_{iT_0+1,K}^{(d,s)}-\mathbb{E}\widehat{\tau}_{iT_0+1,K}^{(d,s)}\right)^2}$.
Similarly, the bias of the average treatment effect estimator $\widehat{\tau}_{T_0+1,K}$ is measured by $\frac{1}{5}\sum_{d=1}^5\left\vert\mathbb{E}\left(\widehat{\tau}_{T_0+1,K}^{(d,s)}\right)-\bar{\tau}_{T_0+1,K}^{(d)}\right\vert$, and the standard deviation is constructed as $\frac{1}{5}\sum_{d=1}^5\sqrt{\mathbb{E}\left(\widehat{\tau}_{T_0+1,K}^{(d,s)}-\mathbb{E}\widehat{\tau}_{T_0+1,K}^{(d,s)}\right)^2}$.\footnote{The performance of the estimators can also be measured using RMSE, which is computed as $\frac{1}{5}\sum_{d=1}^5\sqrt{\mathbb{E}\left(\widehat{\tau}_{iT_0+1,K}^{(d,s)}-\mathbb{E}{\bar{\tau}}_{iT_0+1,K}^{(d,s)}\right)^2}$ for $\widehat{\tau}_{iT_0+1,K}$ and $\frac{1}{5}\sum_{d=1}^5\sqrt{\mathbb{E}\left(\widehat{\tau}_{T_0+1,K}^{(d,s)}-\mathbb{E}{\bar{\tau}}_{T_0+1,K}^{(d,s)}\right)^2}$ for $\widehat{\tau}_{T_0+1,K}$. Since the biases of our estimators are small, these measures are quite similar to SD and are thus omitted from reporting.}

\begin{center}
  \resizebox{13cm}{!}{
    \begin{threeparttable}[!htbp]
      \centering
      \caption{Simulation Results on Model Selection}\label{Ch3_tab_sim1}
      \begin{tabular}{ccc@{\hskip 0.7ex}cccccc @{\hskip 0.5ex}cccccc }
        \toprule
        \multicolumn{4}{c}{} & \multicolumn{5}{c}{Best Set} & \multicolumn{1}{c}{}    & \multicolumn{5}{c}{Model Averaging}                                                                                                                      \\
        \cmidrule(lr){5-9} \cmidrule(lr){11-15}
        \multicolumn{5}{c}{} & \multicolumn{2}{c}{ITE}      & \multicolumn{2}{c}{ATE} & \multicolumn{2}{c}{}                & \multicolumn{2}{c}{ITE} & \multicolumn{2}{c}{ATE}                                                                  \\
        \cmidrule(lr){6-7} \cmidrule(lr){8-9}\cmidrule(lr){12-13} \cmidrule(lr){14-15}
        $N_1$                & $N_0$                        & $T_0$                   &                                     & P                       & Bias                    & SD    & Bias  & SD    &  & P   & Bias  & SD    & Bias  & SD    \\
        \midrule
        50                   & 50                           & 1                       &                                     & 2.2                     & 0.151                   & 1.225 & 0.082 & 0.384 &  & 2.3 & 0.231 & 1.163 & 0.120 & 0.336 \\
        100                  & 100                          & 1                       &                                     & 2.2                     & 0.076                   & 0.764 & 0.032 & 0.212 &  & 2.4 & 0.065 & 0.836 & 0.024 & 0.205 \\
        200                  & 200                          & 1                       &                                     & 2.1                     & 0.038                   & 0.476 & 0.004 & 0.127 &  & 2.6 & 0.046 & 0.712 & 0.011 & 0.133 \\
        \addlinespace[1ex]
        50                   & 50                           & 2                       &                                     & 2.6                     & 0.062                   & 0.875 & 0.014 & 0.253 &  & 2.9 & 0.150 & 0.758 & 0.040 & 0.232 \\
        100                  & 100                          & 2                       &                                     & 2.7                     & 0.035                   & 0.685 & 0.003 & 0.165 &  & 2.9 & 0.073 & 0.563 & 0.014 & 0.159 \\
        200                  & 200                          & 2                       &                                     & 3.2                     & 0.038                   & 0.729 & 0.003 & 0.137 &  & 4.0 & 0.031 & 0.702 & 0.003 & 0.131 \\
        \bottomrule
      \end{tabular}
      \begin{tablenotes}
        \item Note: This table compares the estimator using only the best set of pretreatment outcomes and the estimator constructed from model averaging, in terms of the optimal number of pretreatment outcomes selected by LOO cross-validation, as well as the bias and SD for the ITE and ATE estimates, with varying sample size and number of pretreatment periods, based on 5000 simulations for each setting.
      \end{tablenotes}
    \end{threeparttable}
  }
\end{center}

\medskip

Table \ref{Ch3_tab_sim1} compares the GMM estimator constructed using only the best set of pretreatment outcomes with that constructed by averaging estimators from different models with the same number of pretreatment outcomes.
We see that the best number of pretreatment outcomes, $P$, is slightly larger than the number of unobserved individual characteristics ($f=2$) for both estimators, and increases when the sample size is larger and when there are more pretreatment outcomes available, which is in line with our discussions in section \ref{Ch3_sec_ms}. The estimators constructed by model averaging also tends to select a slightly larger $P$ than the estimator using only the best set of pretreatment outcomes.

In almost all settings, the estimator using only the best set of pretreatment outcomes tends to have a smaller bias, whereas the estimator constructed from model averaging tends to have a smaller variance, except for estimating the individual treatment effects when the number of pretreatment outcomes is very small. The bias and SD also become smaller for both estimators when the sample size as well as the number of pretreatment outcomes grow.

In the following simulations, we fix $P$ at 2 when $T_0=1$, and 3 when $T_0=2$, and construct the GMM estimator using only the best set of pretreatment outcomes, with the best set of pretreatment outcomes selected at the first simulation and used for the remaining simulations for each setting.\footnote{This is mainly to save computing time and does not fundamentally change the conclusions.}

\begin{center}
  \resizebox{11cm}{!}{
    \begin{threeparttable}[!htbp]
      \centering
      \caption{Simulation Results for the GMM estimator}\label{Ch3_tab_sim2}
      \begin{tabular}{ccc@{\hskip 0.7ex}cccc @{\hskip 0.5ex}cccc }
        \toprule
        \multicolumn{4}{c}{} & \multicolumn{3}{c}{ITE} & \multicolumn{1}{c}{} & \multicolumn{3}{c}{ATE}                                                          \\
        \cmidrule(lr){5-7} \cmidrule(lr){9-11}
        $N_1$                & $N_0$                   & $T_0$                &                         & Bias  & SD    & Coverage &  & Bias  & SD    & Coverage \\
        \midrule
        \addlinespace[2ex]
        \multicolumn{11}{c}{Panel A: $\varepsilon_{it,k}^0$ uncorrelated across $t$ and $k$}                                                                     \\
        \addlinespace[1ex]
        50                   & 50                      & 1                    &                         & 0.096 & 1.422 & 0.997    &  & 0.040 & 0.443 & 0.995    \\
        100                  & 100                     & 1                    &                         & 0.043 & 0.856 & 0.992    &  & 0.005 & 0.226 & 0.984    \\
        50                   & 50                      & 2                    &                         & 0.043 & 1.163 & 0.973    &  & 0.011 & 0.288 & 0.959    \\
        100                  & 100                     & 2                    &                         & 0.025 & 0.900 & 0.953    &  & 0.005 & 0.181 & 0.956    \\
        \addlinespace[1ex]
        \multicolumn{11}{c}{Panel B: $\varepsilon_{it,k}^0$ correlated across $t$ and $k$}                                                                       \\
        \addlinespace[1ex]
        50                   & 50                      & 1                    &                         & 0.134 & 1.419 & 0.996    &  & 0.045 & 0.431 & 0.993    \\
        100                  & 100                     & 1                    &                         & 0.065 & 0.851 & 0.991    &  & 0.018 & 0.230 & 0.982    \\
        50                   & 50                      & 2                    &                         & 0.063 & 1.162 & 0.976    &  & 0.008 & 0.294 & 0.961    \\
        100                  & 100                     & 2                    &                         & 0.037 & 0.906 & 0.964    &  & 0.009 & 0.183 & 0.967    \\
        \bottomrule
      \end{tabular}
      \begin{tablenotes}
        \item Note: This table compares the bias and SD of the GMM estimator, as well as the coverage probability of the 95\% confidence interval, with varying sample size and number of pretreatment periods, based on 5000 simulations for each setting.
      \end{tablenotes}
    \end{threeparttable}
  }
\end{center}

\medskip

Table \ref{Ch3_tab_sim2} reports the bias and SD of the GMM estimator, as well as the coverage probability of the 95\% confidence interval, for estimating the individual treatment effects and the average treatment effect.
Panel A shows that the bias and SD for the estimators are small even with a small sample size and a small number of pretreatment outcomes. However, the 95\% confidence intervals tend to have larger coverage probabilities, especially when the number of pretreatment outcomes is small. This distortion is alleviated as more pretreatment outcomes are available.

Since the validity of the GMM estimator relies on the assumption that $\varepsilon_{it,k}$ are uncorrelated across time or outcomes, we examine the performance of the estimator when this assumption is violated in Panel B, where the idiosyncratic shocks follow an AR(1) process over time with the autoregression coefficient being 0.1, and are correlated across outcomes by sharing a common component for different outcomes in the same period. This slightly increases the biases and SD's of the estimators, but the performance of the estimators are still quite good, especially in comparison with related methods as shown in the following tables.

\begin{center}
  \resizebox{11cm}{!}{
    \begin{threeparttable}[!htbp]
      \centering
      \caption{Simulation}\label{Ch3_tab_sim3}
      \begin{tabular}{ccc@{\hskip 0.7ex}ccccc @{\hskip 0.5ex}ccccc }
        \toprule
        \multicolumn{4}{c}{} & \multicolumn{4}{c}{OLS} & \multicolumn{1}{c}{}    & \multicolumn{4}{c}{GMM}                                                                                                        \\
        \cmidrule(lr){5-8} \cmidrule(lr){10-13}
        \multicolumn{4}{c}{} & \multicolumn{2}{c}{ITE} & \multicolumn{2}{c}{ATE} & \multicolumn{1}{c}{}    & \multicolumn{2}{c}{ITE} & \multicolumn{2}{c}{ATE}                                                    \\
        \cmidrule(lr){5-6} \cmidrule(lr){7-8}\cmidrule(lr){10-11} \cmidrule(lr){12-13}
        $N_1$                & $N_0$                   & $T_0$                   &                         & Bias                    & SD                      & Bias  & SD    &  & Bias  & SD    & Bias  & SD    \\
        \midrule
        \addlinespace[2ex]
        \multicolumn{13}{c}{Panel A: Linear conditional mean}                                                                                                                                                     \\
        \addlinespace[1ex]
        100                  & 100                     & 1                       &                         & 0.099                   & 0.622                   & 0.041 & 0.186 &  & 0.113 & 1.297 & 0.086 & 0.257 \\
        200                  & 200                     & 1                       &                         & 0.059                   & 0.415                   & 0.015 & 0.119 &  & 0.011 & 0.430 & 0.003 & 0.129 \\
        100                  & 100                     & 2                       &                         & 0.046                   & 0.677                   & 0.012 & 0.158 &  & 0.026 & 0.901 & 0.003 & 0.179 \\
        200                  & 200                     & 2                       &                         & 0.064                   & 0.547                   & 0.012 & 0.121 &  & 0.028 & 0.937 & 0.003 & 0.142 \\
        \addlinespace[1ex]
        \multicolumn{13}{c}{Panel B: Nonlinear conditional mean}                                                                                                                                                  \\
        \addlinespace[1ex]
        100                  & 100                     & 1                       &                         & 0.097                   & 0.687                   & 0.057 & 0.199 &  & 0.025 & 0.806 & 0.011 & 0.244 \\
        200                  & 200                     & 1                       &                         & 0.140                   & 0.456                   & 0.040 & 0.122 &  & 0.016 & 0.552 & 0.003 & 0.149 \\
        100                  & 100                     & 2                       &                         & 0.077                   & 0.734                   & 0.015 & 0.173 &  & 0.115 & 1.118 & 0.007 & 0.213 \\
        200                  & 200                     & 2                       &                         & 0.093                   & 0.551                   & 0.004 & 0.116 &  & 0.025 & 0.910 & 0.003 & 0.142 \\
        \bottomrule
      \end{tabular}
      \begin{tablenotes}
        \item Note: This table compares the bias and SD for the OLS estimator and the GMM estimator, with varying sample size and number of pretreatment periods, based on 5000 simulations for each setting.
      \end{tablenotes}
    \end{threeparttable}
  }
\end{center}

\medskip

Table \ref{Ch3_tab_sim3} compares our method with the OLS approach in \cite{hsiao2012panel}. In panel A, both the unobserved individual characteristics and the idiosyncratic shocks are normally distributed so that the linear conditional mean assumption is satisfied. The results show that the GMM estimator outperforms the OLS estimator by having a smaller bias in estimating both the individual treatment effects and the average treatment effect, although the variance of the GMM estimator is also larger.

In panel B, the unobserved individual characteristics are drawn from the uniform distribution, and the linear conditional mean assumption is no longer satisfied \citep{li2017estimation}. We see that the results are virtually unchanged for the GMM estimator, while the OLS estimator performs slightly worse by having larger biases and SD's, which is more pronounced in estimating the individual treatment effects.
The results indicate that the linear conditional mean assumption is not a very strong one. Indeed, the distribution of the sum of several random variables would become more bell-shaped like the normal distribution under fairly general conditions, as a result of the central limit theorem. The simulation results are very similar when the unobserved individual characteristics are drawn from a mix of other distributions.

\begin{center}
  \resizebox{11cm}{!}{
    \begin{threeparttable}[!htbp]
      \centering
      \caption{Simulation}\label{Ch3_tab_sim4}
      \begin{tabular}{ccc@{\hskip 0.7ex}ccccc @{\hskip 0.5ex}ccccc }
        \toprule
        \multicolumn{4}{c}{} & \multicolumn{4}{c}{IFE} & \multicolumn{1}{c}{}    & \multicolumn{4}{c}{GMM}                                                                                                        \\
        \cmidrule(lr){5-8} \cmidrule(lr){10-13}
        \multicolumn{4}{c}{} & \multicolumn{2}{c}{ITT} & \multicolumn{2}{c}{ATT} & \multicolumn{1}{c}{}    & \multicolumn{2}{c}{ITT} & \multicolumn{2}{c}{ATT}                                                    \\
        \cmidrule(lr){5-6} \cmidrule(lr){7-8}\cmidrule(lr){10-11} \cmidrule(lr){12-13}
        $N_1$                & $N_0$                   & $T_0$                   &                         & Bias                    & SD                      & Bias  & SD    &  & Bias  & SD    & Bias  & SD    \\
        \midrule
        \addlinespace[2ex]
        \multicolumn{13}{c}{Panel A: $\boldsymbol{X}_{it}$ constant across $t$}                                                                                                                                   \\
        \addlinespace[1ex]
        5                    & 100                     & 1                       &                         & 1.203                   & 1.357                   & 0.657 & 0.610 &  & 0.047 & 1.499 & 0.016 & 0.687 \\
        5                    & 200                     & 1                       &                         & 1.264                   & 1.229                   & 0.492 & 0.548 &  & 0.089 & 1.624 & 0.023 & 0.730 \\
        5                    & 100                     & 2                       &                         & 0.922                   & 1.140                   & 0.263 & 0.518 &  & 0.034 & 1.265 & 0.015 & 0.585 \\
        5                    & 200                     & 2                       &                         & 0.982                   & 1.147                   & 0.378 & 0.520 &  & 0.040 & 1.387 & 0.018 & 0.634 \\
        \addlinespace[1ex]
        \multicolumn{13}{c}{Panel B: $\boldsymbol{\beta}_{t,k}^0$ constant across $t$}                                                                                                                            \\
        \addlinespace[1ex]
        5                    & 100                     & 1                       &                         & 1.289                   & 1.220                   & 0.773 & 0.579 &  & 0.029 & 1.349 & 0.016 & 0.616 \\
        5                    & 200                     & 1                       &                         & 1.681                   & 1.370                   & 0.836 & 0.620 &  & 0.034 & 1.563 & 0.020 & 0.713 \\
        5                    & 100                     & 2                       &                         & 0.930                   & 1.070                   & 0.440 & 0.486 &  & 0.026 & 1.261 & 0.017 & 0.577 \\
        5                    & 200                     & 2                       &                         & 1.417                   & 1.083                   & 1.015 & 0.489 &  & 0.031 & 1.250 & 0.011 & 0.564 \\
        \bottomrule
      \end{tabular}
      \begin{tablenotes}
        \item Note: This table compares the bias and SD for the IFE estimator and the GMM estimator, with varying sample size and number of pretreatment periods, based on 5000 simulations for each setting.
      \end{tablenotes}
    \end{threeparttable}
  }
\end{center}

\medskip

Table \ref{Ch3_tab_sim4} compares our method with the method of estimating the interactive fixed effects model directly, which was first developed in \cite{bai2009panel} and then adapted into the potential outcomes framework by \cite{xu2017generalized} to allow heterogeneous treatment effects. We fix the number of treated individuals at 5, and compare the performance of the two methods in estimating the individual treatment effect on the treated and the average treatment effect on the treated.

We consider two scenarios that are relevant in the context of empirical microeconomics. In panel A, the observed covariates are constant over time. This is plausible for covariates such as gender, race or education level, which are likely to be stable over time.
Since the IFE method requires the observed covariates to be time-varying, the covariates that are constant over time are dropped from the estimation and become part of the unobserved individual characteristics, which makes the model equivalent to a pure factor model with 4 unobserved factors. As we have 5 related outcomes, this model should still be estimable by the IFE method. However, we see that IFE method perform poorly when there are only a small number of pretreatment outcomes to recover the unobserved individual characteristics. The bias and SD of the IFE estimator become smaller as more pretreatment outcomes are available, but are still quite large compared with our method.

To accommodate the restrictive model specification for the IFE method, we allow the covariates to be time-varying while keeping the coefficients constant over time in panel B, although the coefficients are allowed to vary across outcomes since it is unlikely that the coefficients for different outcomes would be the same in practice.
We see that the IFE estimator has poor performance since the model is still misspecified in their method, whereas the results for our method are virtually unchanged.

\begin{center}
  \resizebox{11cm}{!}{
    \begin{threeparttable}[!htbp]
      \centering
      \caption{Simulation}\label{Ch3_tab_sim5}
      \begin{tabular}{ccc@{\hskip 0.7ex}ccccc @{\hskip 0.5ex}ccccc }
        \toprule
        \multicolumn{4}{c}{} & \multicolumn{4}{c}{SCM} & \multicolumn{1}{c}{}    & \multicolumn{4}{c}{GMM}                                                                                                        \\
        \cmidrule(lr){5-8} \cmidrule(lr){10-13}
        \multicolumn{4}{c}{} & \multicolumn{2}{c}{ITT} & \multicolumn{2}{c}{ATT} & \multicolumn{1}{c}{}    & \multicolumn{2}{c}{ITT} & \multicolumn{2}{c}{ATT}                                                    \\
        \cmidrule(lr){5-6} \cmidrule(lr){7-8}\cmidrule(lr){10-11} \cmidrule(lr){12-13}
        $N_1$                & $N_0$                   & $T_0$                   &                         & Bias                    & SD                      & Bias  & SD    &  & Bias  & SD    & Bias  & SD    \\
        \midrule
        \addlinespace[2ex]
        \multicolumn{13}{c}{Panel A: distributions of $\boldsymbol{\mu}_i$ same for treated and control}                                                                                                          \\
        \addlinespace[1ex]
        5                    & 100                     & 1                       &                         & 0.376                   & 1.247                   & 0.245 & 0.577 &  & 0.029 & 1.349 & 0.016 & 0.617 \\
        5                    & 200                     & 1                       &                         & 0.883                   & 1.376                   & 0.698 & 0.628 &  & 0.035 & 1.563 & 0.020 & 0.713 \\
        5                    & 100                     & 2                       &                         & 0.930                   & 1.191                   & 0.526 & 0.547 &  & 0.023 & 1.243 & 0.014 & 0.565 \\
        5                    & 200                     & 2                       &                         & 0.469                   & 1.186                   & 0.126 & 0.531 &  & 0.031 & 1.250 & 0.012 & 0.564 \\
        \addlinespace[1ex]
        \multicolumn{13}{c}{Panel B: distributions of $\boldsymbol{\mu}_i$ different for treated and control}                                                                                                     \\
        \addlinespace[1ex]
        5                    & 100                     & 1                       &                         & 0.763                   & 1.253                   & 0.634 & 0.605 &  & 0.036 & 1.368 & 0.022 & 0.658 \\
        5                    & 200                     & 1                       &                         & 1.412                   & 1.413                   & 1.269 & 0.656 &  & 0.037 & 1.573 & 0.024 & 0.735 \\
        5                    & 100                     & 2                       &                         & 0.982                   & 1.204                   & 0.513 & 0.558 &  & 0.027 & 1.249 & 0.020 & 0.578 \\
        5                    & 200                     & 2                       &                         & 0.781                   & 1.203                   & 0.613 & 0.551 &  & 0.032 & 1.256 & 0.014 & 0.577 \\
        \bottomrule
      \end{tabular}
      \begin{tablenotes}
        \item Note: This table compares the bias and SD for the SCM estimator and the GMM estimator, with varying sample size and number of pretreatment periods, based on 5000 simulations for each setting.
      \end{tablenotes}
    \end{threeparttable}
  }
\end{center}

\medskip

Table \ref{Ch3_tab_sim5} compares our method with the synthetic control method \citep{abadie2010synthetic}. In panel A, the unobserved individual characteristics for both the treated individuals and the untreated individuals are drawn from $N(0,1)$, while in panel B, the unobserved individual characteristics for the treated individuals are drawn from $N(1,1)$.
Since the synthetic control method requires the treated units to be in the convex hull of the control units by restricting the weights assigned to the control units to be nonnegative, their method may perform poorly when the support of the unobserved individual characteristics are different for the treated and untreated individuals.
While our method should be unaffected by the degree of overlapping in the distributions of the unobserved individual characteristics for the two treatment groups.
The simulation results show that indeed the synthetic control estimator performs worse in panel B. Perhaps somewhat surprising is that its performance is also poor compared with our method in panel A. This is because the coefficients are outcome-specific, so that the levels of the outcomes are also likely to vary across outcomes, which makes it more difficult to obtain a good pretreatment fit under the nonnegativity restriction. In comparison, our method has good performance in both panels.

Overall, the simulation results show that our method has good performance in terms of the bias and SD in estimating the individual treatment effects and the average treatment effect under various settings, and has superior performance than related methods. The shortcoming of our method is that the confidence intervals tend to be too wide, especially when the number of pretreatment outcomes is small.

\section{Empirical Application}\label{Ch3_sec_app}

We illustrate our method by estimating the effect of health insurance coverage on the individual usage of hospital emergency departments.

Although the usage of emergency departments applies to only a small proportion of the population, it imposes great financial pressure on the health care system. In addition, it is not clear ex ante what the direction of the effect should be.
E.g., \cite{taubman2014medicaid} argues that health insurance coverage could either increase emergency-department use by reducing its cost for the patients, or decrease emergency-department use by encouraging primary care use or improving health.

The findings on emergency-department use have been mixed.
Using survey data collected from the participants of the Oregon Health Insurance Experiment (OHIE) about a year after they were notified of the selection results, \cite{finkelstein2012oregon} find no discernible impact of health insurance coverage on emergency-department use.\footnote{The Oregon Health Insurance Experiment (OHIE) was initiated in 2008, targeting at low-income adults in Oregon who had been without health insurance for at least 6 months. Among the 89,824 individuals who signed up, 35,169 individuals were randomly selected by the lottery and were eligible to apply for the Oregon Health Plan (OHP) Standard program, which provided relatively comprehensive medical benefits with no consumer cost sharing, and the monthly premiums was only between \$0 and \$20 depending on the income. As a randomised controlled experiment, the OHIE offers an opportunity for researchers to study the effect of health insurance coverage on various health outcomes without confounding factors.}
While using the visit-level data for all emergency-department visits to twelve hospitals in the Portland area probabilistically matched to the OHIE study population on the basis of name, date of birth, and gender, \cite{taubman2014medicaid} find that health insurance coverage significantly increases emergency-department use by 0.41 visits per person, from an average of 1.02 visits per person in the control group in the first 15 months of the experiment. They also examine whether the effect differs across heterogeneous groups, and find statistically significant increases in emergency-department use across most subgroups in terms of the number of pre-experiment emergency-department visits, hospital admission (inpatient or outpatient visits), timing (on-hours or off-hours visits), the type of visits (emergent and not preventable, emergent and preventable, primary care treatable, and non-emergent), as well as gender, age, and health condition.

In this application, we wish to estimate the effect of health insurance coverage on emergency-department use for each individual in the sample.
% , and identify those for whom the effect is statistically significant
This would potentially help us better understand whether and how health insurance coverage affects emergency-department use, compared with using only the average treatment effect for the whole sample or for some preassigned subgroups (conditional average treatment effects).

Our data combines both the hospital emergency-department visit-level data and the survey data. There are two time periods, one before the randomisation and one after.\footnote{The pre-randomisation period in the hospital visit-level data was from January 2007 to March 2008, and the post-randomisation period was from March 2008 to September 2009. The two surveys were collected shortly after the randomisation and about a year after randomisation, respectively, each covering a 6-month period before the survey.}
To estimate the individual treatment effects, we include 3 observed covariates including gender, birth year, and household income as a percentage of the federal poverty line, and 10 related outcomes including different types of emergency-department visits and medical charges.
We also consider a rich list of variables on which we make comparisons for individuals with different estimated treatment effects.
There are 2154 individuals with complete information on these variables.\footnote{Note that our sample size is significantly smaller than the other studies using the OHIE data, due to the inclusion of the extensive list of variables. For example, the sample size in \cite{finkelstein2012oregon} is 74,922, and the sample size in \cite{taubman2014medicaid} is 24,646. So our sample may not be representative of the OHIE sample and the results in different studies may not be directly comparable.}

\resizebox{\columnwidth}{!}{
  \begin{threeparttable}
    \centering
    \caption{Sample Selection}\label{Ch3_tab_sel}
    \begin{tabular}{rrrcrrc}
      \hline
                                                          & Selected & Not-selected & Difference & Insured & Not-insured & Difference \\
                                                          & (1)      & (2)          & (3)        & (4)     & (5)         & (6)        \\
      \hline
      Female                                              & 0.59     & 0.60         & -0.01      & 0.63    & 0.58        & 0.05*      \\
      Birth year                                          & 1966.24  & 1966.44      & -0.19      & 1967.03 & 1966.09     & 0.95       \\
      Household income as percent of federal poverty line & 79.77    & 75.67        & 4.10       & 53.73   & 86.57       & -32.84***  \\
      \# ED visits                                        & 0.32     & 0.43         & -0.11**    & 0.48    & 0.33        & 0.15**     \\
      \# outpatient ED visits                             & 0.27     & 0.35         & -0.09**    & 0.40    & 0.28        & 0.13**     \\
      \# weekday daytime ED visits                        & 0.18     & 0.24         & -0.06**    & 0.27    & 0.19        & 0.08**     \\
      \# emergent non-preventable ED visits               & 0.07     & 0.09         & -0.02      & 0.11    & 0.07        & 0.04**     \\
      \# emergent preventable ED visits                   & 0.03     & 0.03         & 0.00       & 0.03    & 0.02        & 0.01       \\
      \# primary care treatable ED visits                 & 0.10     & 0.15         & -0.05***   & 0.15    & 0.12        & 0.04       \\
      Total charges                                       & 859.71   & 1276.90      & -417.19*   & 1379.58 & 947.54      & 432.04     \\
      Total ED charges                                    & 345.48   & 504.64       & -159.16**  & 494.95  & 396.86      & 98.09      \\
      \# ED visits to a high uninsured volume hospital    & 0.17     & 0.22         & -0.05      & 0.25    & 0.17        & 0.08**     \\
      \# ED visits (survey)                               & 0.24     & 0.30         & -0.06*     & 0.38    & 0.23        & 0.15***    \\
      \hline
      N                                                   & 1103     & 1051         &            & 577     & 1577        &            \\
      \hline
    \end{tabular}
    \begin{tablenotes}
      \item 1) This table compares the mean values of the covariates and related outcomes in the pretreatment period for individuals selected/not-selected by the lottery, and individuals insured/not-insured.
      \item 2) Significance levels of the two-sample t-test: * 10\%, ** 5\%, *** 1\%.
    \end{tablenotes}
  \end{threeparttable}
}

\medskip

The first 3 columns in Table \ref{Ch3_tab_sel} present the mean values of the covariates and outcomes in the pretreatment period for individuals selected by the lottery and for individuals not selected by the lottery, as well as the difference between the two groups.
Since a considerable number of observations with incomplete information are dropped, being selected by the lottery is negatively correlated with different types of emergency-department visits in the pretreatment period in our sample, which suggests that the lottery assignment is not likely to be a valid instrument for health insurance coverage.
Table \ref{Ch3_tab_sel} also compares the mean pretreatment characteristics for individuals covered by health insurance and those not covered, which shows that individuals who were covered were poorer and used emergency-department more frequently in the pretreatment period than people who were not covered by health insurance.

\begin{figure}[!htbp]
  \centering
  \includegraphics[width=.7\textwidth]{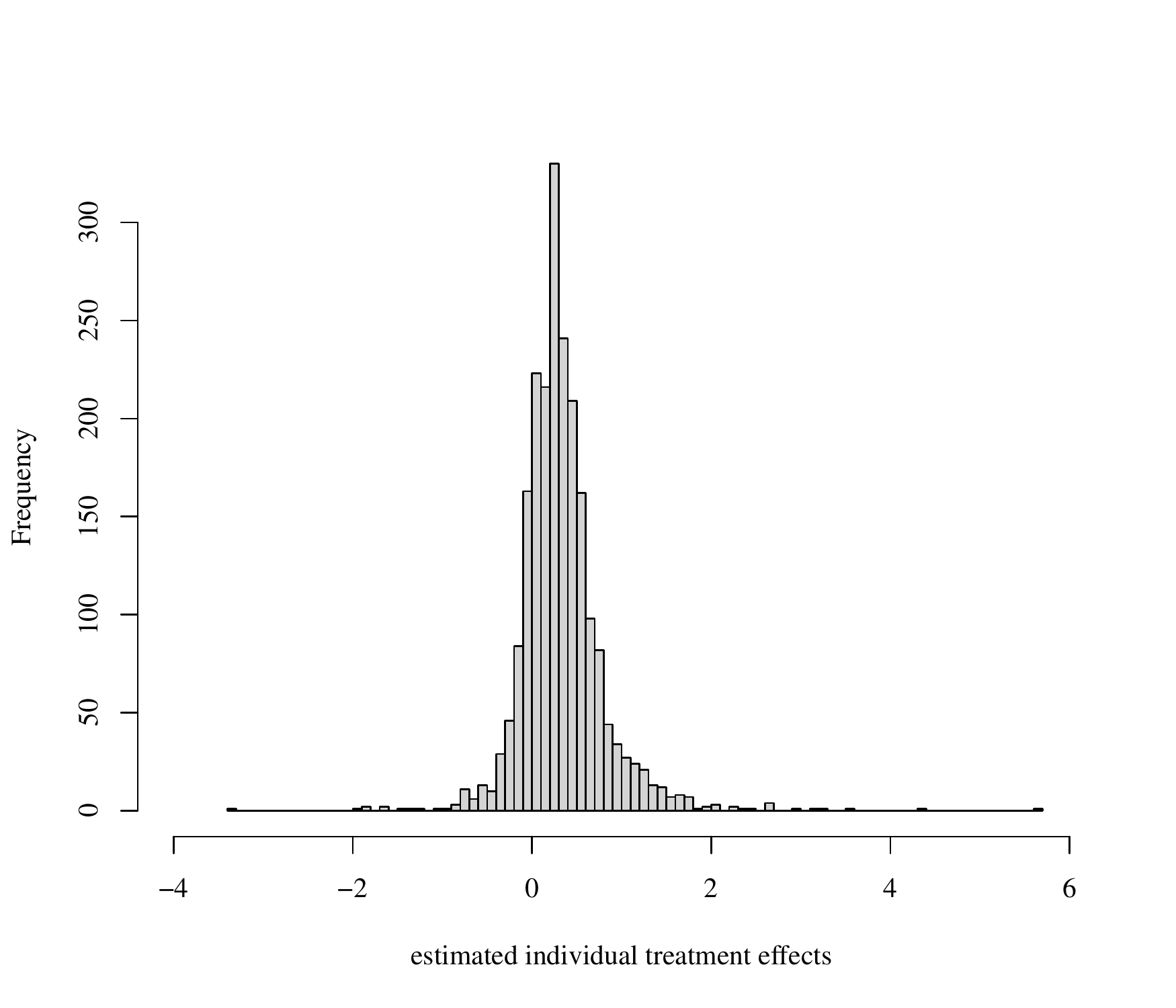}
  \caption{Distribution of the estimated individual treatment effects}
  \label{Ch3_fig_ITE}
\end{figure}

Figure \ref{Ch3_fig_ITE} shows the distribution of the estimated individual treatment effects using our method.\footnote{As mentioned earlier, since only individuals with complete information on the variables are selected into our sample, the distribution may not be representative of the OHIE participants. The distribution of the estimated individual treatment effects may be more spread out than the distribution of the true effects due to noise, or less spread out since the estimates are based on the parametric models for the potential outcomes, which may be over-simplifying compared with the true models.} The mean of estimated individual treatment effects, or the estimated average treatment effect is 0.33, which is significant at 1\% level. 114 individuals have treatment effects that are significant at 10\% level, among which 23 are negative and 91 are positive.\footnote{Note that if we were to adjust for multiple testing, e.g., using the Benjamini–Hochberg procedure to control the false discovery rate (FDR) at 10\% level, then we would be left with only one individual whose treatment effect is significant. Although the small number of individuals with significant treatment effects may also be attributed to the overestimation of the variance of the individual treatment effect estimator.}
We then move on to compare the characteristics of the individuals based on their estimated treatment effects, which are presented in Table \ref{Ch3_tab_comp1}-\ref{Ch3_tab_comp3}.
Column (1) shows the mean characteristics of individuals whose treatment effects are not significant at 10\% level, column (2) shows the mean characteristics of individuals whose treatment effects are significantly negative, column (3) shows the differences between column (2) and column (1), column (4) shows the mean characteristics of individuals whose treatment effects are significantly positive, and column (5) shows the differences between column (4) and column (1).

Compared with individuals who would not be significantly affected by the treatment, individuals who would significantly decrease or increase their emergency-department visits if covered by health insurance both had more emergency-department visits and more medical charges in the pretreatment period. However, these two groups were also distinct in some characteristics.

The individuals who would have fewer emergency-department visits if covered by health insurance were on average 7 years younger than individuals in the control group and 10 years younger than the positive group, more likely to be female with less education, and in particular, were much poorer than individuals in the other groups. They were more likely to be diagnosed with depression but not other conditions. Importantly, they were less likely to have any primary care visits, and more likely to use emergency department as the place for medical care.
In terms of emergency-department use in the pretreatment period, they had fewer visits resulting in hospitalisation, more outpatient visits, more preventable and non-emergent visits, more visits to hospitals with a low fraction of uninsured patients, fewer visits for chronic conditions, and more visits for injury. Although their medical charges were not as high as those for individuals in the positive group, they owed more money for medical expenses.

In comparison, individuals who would have more emergency-department visits if covered by health insurance were more likely to be older, male, and with household income right above the federal poverty line, which means that they were not as poor as the individuals in the other groups.
They were in worse health conditions, more likely to be diagnosed with diabetes and high blood pressure, and were taking more prescription medications.
They also had more emergency-department visits of all types in the pretreatment period, including visits resulting in hospitalisation and visits for more severe conditions such as chronic conditions, chest pain and psychological conditions, and they incurred more medical charges.

Overall, these comparisons suggest that the individuals who would have fewer emergency-department visits if covered by health insurance were younger and not in very bad physical conditions. However, their access to primary care were limited due to being in much more disadvantaged positions financially, which made them resort to using the emergency department as the usual place for medical care. In contrast, the individuals who would have more emergency-department visits if covered by health insurance were more likely to be older and in poor health. So even with access to primary care, they still used emergency departments more often for severe conditions, although sometimes for primary care treatable and non-emergent conditions as well.

All in all, it seems that both mechanisms discussed by \cite{taubman2014medicaid} are playing a role. For people who used emergency department for medical care because they did not have access to primary care service, health insurance coverage decreases emergency-department use because it increases access to primary care and may also lead to improved health. Whereas for people who had access to primary care and still used emergency department due to worse physical conditions, health insurance coverage increases emergency-department use because it reduces the out-of-pocket cost of the visits.

This application shows the potential value of estimating individual treatment effects for policy evaluation. Our findings would not have been possible by only estimating conditional average treatment effects, as we would not be able to distinguish individuals with positive or negative treatment effects at the first place.

% Since individuals with negative or positive treatment effects may share many similar characteristics, when put together, we may not be able to see a significant overall effect for the whole sample or subgroups divided according to these characteristics.
% By estimating the individual treatment effects, we may better distinguish individuals with different treatment effects, which could then guide us in designing individualised policies that target these groups.

\resizebox{\columnwidth}{!}{
  \begin{threeparttable}
    \centering
    \caption{Comparison of Characteristics}\label{Ch3_tab_comp1}
    \begin{tabular}{rrrcrc}
      \hline
                                                          & Same    & Fewer   & Difference & More    & Difference \\
                                                          & (1)     & (2)     & (3)        & (4)     & (5)        \\

      \hline
      Birth year                                          & 1966.42 & 1973.90 & 7.48*      & 1964.09 & -2.33**    \\
      Female                                              & 0.61    & 1.00    & 0.39***    & 0.37    & -0.24***   \\
      Education                                           & 2.52    & 1.90    & -0.62**    & 2.30    & -0.22**    \\
      English                                             & 0.89    & 0.90    & 0.01       & 0.95    & 0.07***    \\
      {Race:}                                                                                                     \\
      \textit{White}                                      & 0.75    & 0.50    & -0.25      & 0.79    & 0.04       \\
      \textit{Hispanic}                                   & 0.10    & 0.10    & 0.00       & 0.12    & 0.02       \\
      \textit{Black}                                      & 0.06    & 0.30    & 0.24       & 0.06    & 0.00       \\
      \textit{Asian}                                      & 0.10    & 0.10    & 0.00       & 0.06    & -0.05*     \\
      \textit{American Indian or Alaska Native}           & 0.04    & 0.10    & 0.06       & 0.07    & 0.03       \\
      \textit{Native Hawaiian or Pacific Islander}        & 0.01    & 0.00    & -0.01***   & 0.01    & 0.00       \\
      \textit{Other races}                                & 0.07    & 0.10    & 0.03       & 0.06    & -0.01      \\
      Employed                                            & 0.53    & 0.60    & 0.07       & 0.48    & -0.05      \\
      Average hours worked per week                       & 2.25    & 2.40    & 0.15       & 2.20    & -0.05      \\
      Household income as percent of federal poverty line & 76.12   & 28.02   & -48.09***  & 114.98  & 38.86***   \\
      Household Size (adults and children)                & 2.99    & 3.50    & 0.51       & 2.61    & -0.39**    \\
      Number of family members under 19 living in house   & 0.88    & 1.00    & 0.12       & 0.56    & -0.33***   \\
      Overall health                                      & 3.01    & 2.40    & -0.61      & 2.84    & -0.17      \\
      Health change                                       & -0.09   & -0.10   & -0.01      & -0.08   & 0.01       \\
      \# days physical health not good                    & 6.97    & 7.60    & 0.63       & 9.49    & 2.52**     \\
      \# days mental health not good                      & 8.43    & 13.90   & 5.47       & 10.30   & 1.87       \\
      \# days poor health impaired regular activities     & 6.09    & 7.90    & 1.81       & 8.36    & 2.27**     \\
      Diabetes                                            & 0.10    & 0.20    & 0.10       & 0.20    & 0.10**     \\
      Asthma                                              & 0.13    & 0.20    & 0.07       & 0.13    & 0.01       \\
      High blood pressure                                 & 0.24    & 0.20    & -0.04      & 0.38    & 0.15***    \\
      Depression                                          & 0.37    & 0.70    & 0.33*      & 0.47    & 0.10**     \\
      Any primary care visits                             & 0.55    & 0.20    & -0.35**    & 0.57    & 0.02       \\
      \# primary care visits                              & 1.64    & 1.80    & 0.16       & 1.86    & 0.22       \\
      Any hospital visits                                 & 0.04    & 0.10    & 0.06       & 0.12    & 0.07**     \\
      \# hospital visits                                  & 0.05    & 0.10    & 0.05       & 0.16    & 0.11**     \\
      {Usual place for medical care:}                                                                             \\
      \textit{Private clinic}                             & 0.18    & 0.00    & -0.18***   & 0.28    & 0.10**     \\
      \textit{Public clinic}                              & 0.17    & 0.10    & -0.07      & 0.18    & 0.01       \\
      \textit{Hospital-based clinic}                      & 0.07    & 0.00    & -0.07***   & 0.08    & 0.01       \\
      \textit{Hospital ER}                                & 0.03    & 0.50    & 0.47**     & 0.06    & 0.03       \\
      \textit{Urgent care clinic}                         & 0.03    & 0.10    & 0.07       & 0.02    & -0.01      \\
      \textit{Other places}                               & 0.06    & 0.00    & -0.06***   & 0.07    & 0.00       \\
      \textit{Don't have usual place}                     & 0.46    & 0.30    & -0.16      & 0.32    & -0.14***   \\
      \hline
    \end{tabular}
    \begin{tablenotes}
      \item 1) This table shows the mean characteristics for individuals whose treatment effects are not statistically significant at 10\% level (column 1), whose treatment effects are significantly negative (column 2), and whose treatment effects are significantly positive (column 4).
      Column (3) contains the differences between column (2) and column (1), and column (5) contains the differences between column (4) and column (1).
      \item 2) Significance levels of the two-sample t-test: * 10\%, ** 5\%, *** 1\%.
    \end{tablenotes}
  \end{threeparttable}
}

\resizebox{\columnwidth}{!}{
  \begin{threeparttable}
    \centering
    \caption{Comparison of Characteristics}\label{Ch3_tab_comp2}
    \begin{tabular}{rrrcrc}
      \hline
                                                            & Same    & Fewer   & Difference & More    & Difference \\
                                                            & (1)     & (2)     & (3)        & (4)     & (5)        \\

      \hline
      Needed medical care                                   & 0.68    & 1.00    & 0.32***    & 0.76    & 0.08*      \\
      Got all needed medical care                           & 0.58    & 0.40    & -0.18      & 0.56    & -0.02      \\
      Reason went without care:                                                                                     \\
      \textit{Cost too much}                                & 0.33    & 0.20    & -0.13      & 0.31    & -0.02      \\
      \textit{No insurance}                                 & 0.36    & 0.50    & 0.14       & 0.35    & -0.01      \\
      \textit{Doc wouldn't take insurance}                  & 0.01    & 0.00    & -0.01***   & 0.00    & -0.01***   \\
      \textit{Owed money to provider}                       & 0.05    & 0.00    & -0.05***   & 0.10    & 0.05*      \\
      \textit{Couldn't get an appointment}                  & 0.03    & 0.00    & -0.03***   & 0.01    & -0.02*     \\
      \textit{Office wasn't open}                           & 0.01    & 0.00    & -0.01***   & 0.00    & -0.01***   \\
      \textit{Didn't have a doctor}                         & 0.11    & 0.20    & 0.09       & 0.10    & -0.01      \\
      \textit{Other reasons}                                & 0.03    & 0.00    & -0.03***   & 0.04    & 0.01       \\
      \textit{Don't know}                                   & 0.00    & 0.00    & 0.00***    & 0.00    & 0.00***    \\
      Needed prescription medications                       & 0.61    & 0.80    & 0.19       & 0.77    & 0.16***    \\
      Got all needed prescriptions                          & 0.74    & 0.70    & -0.04      & 0.64    & -0.09*     \\
      Currently taking any prescription medications         & 0.44    & 0.30    & -0.14      & 0.68    & 0.25***    \\
      \# prescription medications taking                    & 1.37    & 1.30    & -0.07      & 2.56    & 1.18***    \\
      Reason went without prescription medication:                                                                  \\
      \textit{Cost too much}                                & 0.21    & 0.20    & -0.01      & 0.27    & 0.06       \\
      \textit{No insurance}                                 & 0.20    & 0.10    & -0.10      & 0.20    & 0.00       \\
      \textit{Didn't have doctor}                           & 0.08    & 0.10    & 0.02       & 0.10    & 0.01       \\
      \textit{Couldn't get prescription}                    & 0.08    & 0.10    & 0.02       & 0.07    & -0.01      \\
      \textit{Couldn't get to pharmacy}                     & 0.01    & 0.10    & 0.09       & 0.00    & -0.01***   \\
      \textit{Other reasons}                                & 0.02    & 0.00    & -0.02***   & 0.04    & 0.02       \\
      \textit{Don't know}                                   & 0.00    & 0.00    & 0.00       & 0.00    & 0.00       \\
      Needed dental care                                    & 0.70    & 1.00    & 0.30***    & 0.70    & 0.00       \\
      Got all needed dental care                            & 0.41    & 0.30    & -0.11      & 0.37    & -0.05      \\
      Any ER visits                                         & 0.14    & 0.90    & 0.76***    & 0.26    & 0.12***    \\
      \# of ER visits                                       & 0.25    & 2.40    & 2.15***    & 0.45    & 0.20**     \\
      Used emergency room for non-emergency care            & 0.02    & 0.10    & 0.08       & 0.06    & 0.03       \\
      Reason went to ER:                                                                                            \\
      \textit{Needed emergency care}                        & 0.05    & 0.80    & 0.75***    & 0.06    & 0.01       \\
      \textit{Clinics closed}                               & 0.01    & 0.30    & 0.29*      & 0.02    & 0.00       \\
      \textit{Couldn't get doctor's appointment}            & 0.02    & 0.20    & 0.18       & 0.02    & 0.00       \\
      \textit{Didn't have personal doctor}                  & 0.02    & 0.30    & 0.28       & 0.03    & 0.01       \\
      \textit{Couldn't afford copay to see a doctor}        & 0.01    & 0.20    & 0.19       & 0.02    & 0.00       \\
      \textit{Didn't know where else to go}                 & 0.02    & 0.20    & 0.18       & 0.04    & 0.02       \\
      \textit{Other reason}                                 & 0.01    & 0.10    & 0.09       & 0.02    & 0.01       \\
      \textit{Needed prescription drug}                     & 0.01    & 0.10    & 0.09       & 0.00    & -0.01***   \\
      \textit{Don't know}                                   & 0.00    & 0.00    & 0.00       & 0.00    & 0.00       \\
      Any out of pocket costs for medical care              & 0.65    & 0.80    & 0.15       & 0.71    & 0.06       \\
      Total out of pocket costs for medical care            & 5195.07 & 1257.00 & -3938.07   & 1136.44 & -4058.62   \\
      Borrowed money/skipped bills to pay health care bills & 0.34    & 0.50    & 0.16       & 0.47    & 0.13**     \\
      Currently owe money for medical expenses              & 0.46    & 0.80    & 0.34**     & 0.71    & 0.25***    \\
      Total amount currently owed for medical expenses      & 1559.40 & 7354.00 & 5794.60**  & 5694.17 & 4134.77*** \\
      \hline
    \end{tabular}
    \begin{tablenotes}
      \item 1) This table shows the mean characteristics for individuals whose treatment effects are not statistically significant at 10\% level (column 1), whose treatment effects are significantly negative (column 2), and whose treatment effects are significantly positive (column 4).
      Column (3) contains the differences between column (2) and column (1), and column (5) contains the differences between column (4) and column (1).
      \item 2) Significance levels of the two-sample t-test: * 10\%, ** 5\%, *** 1\%.
    \end{tablenotes}
  \end{threeparttable}
}

\resizebox{\columnwidth}{!}{
  \begin{threeparttable}
    \centering
    \caption{Comparison of Characteristics}\label{Ch3_tab_comp3}
    \begin{tabular}{rrrcrc}
      \hline
                                                         & Same   & Fewer   & Difference & More    & Difference \\
                                                         & (1)    & (2)     & (3)        & (4)     & (5)        \\

      \hline
      Any ED visits                                      & 0.16   & 0.90    & 0.74***    & 0.84    & 0.68***    \\
      \# ED visits                                       & 0.28   & 3.10    & 2.82***    & 1.94    & 1.67***    \\
      Any ED visits resulting in hospitalization         & 0.03   & 0.00    & -0.03***   & 0.37    & 0.34***    \\
      \# ED visits resulting in hospitalization          & 0.03   & 0.00    & -0.03***   & 0.62    & 0.59***    \\
      Any outpatient ED visits                           & 0.15   & 0.90    & 0.75***    & 0.67    & 0.52***    \\
      \# outpatient ED visits                            & 0.25   & 3.10    & 2.85***    & 1.32    & 1.07***    \\
      Any weekday daytime ED visits                      & 0.10   & 0.60    & 0.50**     & 0.62    & 0.52***    \\
      \# weekday daytime ED visits                       & 0.15   & 1.50    & 1.35**     & 1.14    & 0.99***    \\
      Any off-time ED visits                             & 0.09   & 0.90    & 0.81***    & 0.51    & 0.42***    \\
      \# off-time ED visits                              & 0.12   & 1.60    & 1.48***    & 0.83    & 0.71***    \\
      \# emergent non-preventable ED visits              & 0.05   & 0.17    & 0.12       & 0.66    & 0.61***    \\
      \# emergent preventable ED visits                  & 0.02   & 0.14    & 0.12**     & 0.15    & 0.12***    \\
      \# primary care treatable ED visits                & 0.10   & 1.37    & 1.27***    & 0.46    & 0.36***    \\
      \# non-emergent ED visits                          & 0.05   & 1.22    & 1.16***    & 0.35    & 0.29***    \\
      \# unclassified ED visits                          & 0.05   & 0.20    & 0.15       & 0.36    & 0.30***    \\
      Any ambulatory case sensitive ED visits            & 0.01   & 0.00    & -0.01***   & 0.15    & 0.14***    \\
      \# ambulatory case sensitive ED visits             & 0.02   & 0.00    & -0.02***   & 0.15    & 0.13***    \\
      Any ED visits to a high uninsured volume hospital  & 0.08   & 0.20    & 0.12       & 0.76    & 0.68***    \\
      \# ED visits to a high uninsured volume hospital   & 0.12   & 0.30    & 0.18       & 1.61    & 1.49***    \\
      Any ED visits to a low uninsured volume hospital   & 0.09   & 0.90    & 0.81***    & 0.19    & 0.10**     \\
      \# ED visits to a low uninsured volume hospital    & 0.15   & 2.80    & 2.65***    & 0.33    & 0.17**     \\
      Any ED visits for chronic conditions               & 0.03   & 0.30    & 0.27       & 0.36    & 0.32***    \\
      \# ED visits for chronic conditions                & 0.05   & 0.50    & 0.45       & 0.62    & 0.57***    \\
      Any ED visits for injury                           & 0.06   & 0.40    & 0.34*      & 0.32    & 0.26***    \\
      \# ED visits for injury                            & 0.07   & 0.40    & 0.33*      & 0.43    & 0.37***    \\
      Any ED visits for skin conditions                  & 0.01   & 0.10    & 0.09       & 0.02    & 0.01       \\
      \# ED visits for skin conditions                   & 0.02   & 0.10    & 0.08       & 0.03    & 0.01       \\
      Any ED visits for abdominal pain                   & 0.01   & 0.10    & 0.09       & 0.06    & 0.05**     \\
      \# ED visits for abdominal pain                    & 0.01   & 0.10    & 0.09       & 0.10    & 0.09**     \\
      Any ED visits for back pain                        & 0.01   & 0.30    & 0.29*      & 0.04    & 0.03       \\
      \# ED visits for back pain                         & 0.01   & 0.40    & 0.39       & 0.06    & 0.04       \\
      Any ED visits for chest pain                       & 0.01   & 0.00    & -0.01***   & 0.05    & 0.04*      \\
      \# ED visits for chest pain                        & 0.01   & 0.00    & -0.01***   & 0.05    & 0.04*      \\
      Any ED visits for headache                         & 0.01   & 0.00    & -0.01***   & 0.01    & 0.00       \\
      \# ED visits for headache                          & 0.01   & 0.00    & -0.01***   & 0.01    & 0.00       \\
      Any ED visits for mood disorders                   & 0.00   & 0.00    & 0.00**     & 0.09    & 0.08***    \\
      \# ED visits for mood disorders                    & 0.00   & 0.00    & 0.00**     & 0.15    & 0.15**     \\
      Any ED visits for psych conditions/substance abuse & 0.01   & 0.00    & -0.01***   & 0.17    & 0.17***    \\
      \# ED visits for psych conditions/substance abuse  & 0.01   & 0.00    & -0.01***   & 0.36    & 0.34***    \\
      Total ED charges                                   & 274.98 & 1818.44 & 1543.46**  & 3195.22 & 2920.24*** \\
      Total charges                                      & 639.70 & 2223.85 & 1584.16**  & 9260.22 & 8620.52*** \\
      \hline
    \end{tabular}
    \begin{tablenotes}
      \item 1) This table shows the mean characteristics for individuals whose treatment effects are not statistically significant at 10\% level (column 1), whose treatment effects are significantly negative (column 2), and whose treatment effects are significantly positive (column 4).
      Column (3) contains the differences between column (2) and column (1), and column (5) contains the differences between column (4) and column (1).
      \item 2) Significance levels of the two-sample t-test: * 10\%, ** 5\%, *** 1\%.
    \end{tablenotes}
  \end{threeparttable}
}

\section{Conclusion}\label{Ch3_sec_con}

In this paper, we propose a method for estimating the individual treatment effects using panel data, where multiple related outcomes are observed for a large number of individuals over a small number of pretreatment periods. The method is based on the interactive fixed effects model, and allows both the treatment assignment and the potential outcomes to be correlated with the unobserved individual characteristics. Monte Carlo simulations show that our method outperforms related methods. We also provide an example of estimating the effect of health insurance coverage on individual usage of hospital emergency departments using the Oregon Health Insurance Experiment data.

There are several directions for future research.
First, our method requires the idiosyncratic shocks in the pretreatment outcomes to be uncorrelated either over time or across outcomes. It would be a valuable addition to allow (or detect and adjust for) more general dependence structure in the idiosyncratic shocks.
% use overidentification test (J test) to test if there is unwanted correlation?
Second, since the residuals of the rearranged models are not estimates of the idiosyncratic shocks, the variance of our estimator may be over-estimated, especially when the number of pretreatment outcomes is small. A necessary step for future research is to correct for this bias.
Third, the repeated pretreatment set splitting and averaging approach in our method is computationally expensive. It would be an interesting direction for future research to find better ways to select related outcomes or use more flexible averaging scheme.
Fourth, the linear model specification may be restrictive. There is potential to extend our method, perhaps in combination with more flexible machine learning methods, to work with more general nonlinear outcomes.

\begin{appendices}

  \section{Proofs}

  \begin{proof}[Proof of Proposition \ref{Ch3_prop_ITE_GMM}]

    \begin{align*}
      \widehat{\tau}_{it,k}-\tau_{it,k}
      = & \widehat{Y}_{it,k}^1-\widehat{Y}_{it,k}^0-\left(Y_{it,k}^1-Y_{it,k}^0\right)                                                                                                                                                 \\
      = & \boldsymbol{Z}_{it}'\left(\widehat{\boldsymbol{\theta}}_{t,k}^1-\boldsymbol{\theta}_{t,k}^1\right)-\boldsymbol{Z}_{it}'\left(\widehat{\boldsymbol{\theta}}_{t,k}^0-\boldsymbol{\theta}_{t,k}^0\right)-e_{it,k}^1+e_{it,k}^0.
    \end{align*}

    Given the assumptions and our models in \eqref{Ch3_eq_IFE1} and \eqref{Ch3_eq_IFE0}, we have that $Y_{it,k}$, $t\le T_0$, $k\in\mathcal{K}$ are i.i.d. for all $i\in\mathcal{T}$ and all $i\in\mathcal{C}$, and $\mathbb{E}|Y_{it,k}|^2<\infty$, so that $\boldsymbol{Z}_{it}$ and $\boldsymbol{R}_{it}$ are i.i.d. for all $i\in\mathcal{T}$ and all $i\in\mathcal{C}$, $\mathbb{E}\Vert\boldsymbol{Z}_{it}\Vert^2<\infty$, and $\mathbb{E}\Vert\boldsymbol{R}_{it}\Vert^2<\infty$. In addition, $\mathbb{E}\left(\boldsymbol{Z}_{it}\boldsymbol{R}_{it}'\right)$ has full rank due to the observed covariates and the unobserved individual characteristics, so that $\mathbb{E}\left(\boldsymbol{Z}_{it}\boldsymbol{R}_{it}'\right)\boldsymbol{W}^1\mathbb{E}\left(\boldsymbol{R}_{it}\boldsymbol{Z}_{it}'\right)$ is invertible.
    Therefore, the weak law of large numbers and the continuous mapping theorem hold, and
    \begin{align*}
      \widehat{\boldsymbol{\theta}}_{t,k}^1-\boldsymbol{\theta}_{t,k}^1 & =\left({\boldsymbol{Z}_{t}^1}'\boldsymbol{R}_{t}^1\boldsymbol{W}^1{\boldsymbol{R}_{t}^1}'\boldsymbol{Z}_{t}^1\right)^{-1}{\boldsymbol{Z}_{t}^1}'\boldsymbol{R}_{t}^1\boldsymbol{W}^1{\boldsymbol{R}_{t}^1}'\boldsymbol{e}_{t,k}^1                                                                                        \\
                                                                        & =\left(\frac{1}{N_1}{\boldsymbol{Z}_{t}^1}'\boldsymbol{R}_{t}^1\boldsymbol{W}^1\frac{1}{N_1}{\boldsymbol{R}_{t}^1}'\boldsymbol{Z}_{t}^1\right)^{-1}\frac{1}{N_1}{\boldsymbol{Z}_{t}^1}'\boldsymbol{R}_{t}^1\boldsymbol{W}^1\frac{1}{N_1}{\boldsymbol{R}_{t}^1}'\boldsymbol{e}_{t,k}^1                                    \\
                                                                        & \overset{p}{\rightarrow}\left[\mathbb{E}\left(\boldsymbol{Z}_{it}\boldsymbol{R}_{it}'\right)\boldsymbol{W}^1\mathbb{E}\left(\boldsymbol{R}_{it}\boldsymbol{Z}_{it}'\right)\right]^{-1}\mathbb{E}\left(\boldsymbol{Z}_{it}\boldsymbol{R}_{it}'\right)\boldsymbol{W}^1\mathbb{E}\left(\boldsymbol{R}_{it}e_{it,k}^1\right) \\
                                                                        & =0,
    \end{align*}
    as $N_1\rightarrow\infty$.
    Similarly, it can be shown that $\widehat{\boldsymbol{\theta}}_{t,k}^0-\boldsymbol{\theta}_{t,k}^0\overset{p}{\rightarrow}0$ as $N_0\rightarrow\infty$.

    Since $\boldsymbol{Z}_{it}=O_p(1)$, we have $\boldsymbol{Z}_{it}'\left(\widehat{\boldsymbol{\theta}}_{t,k}^1-\boldsymbol{\theta}_{t,k}^1\right)=o_p(1)$ and $\boldsymbol{Z}_{it}'\left(\widehat{\boldsymbol{\theta}}_{t,k}^0-\boldsymbol{\theta}_{t,k}^0\right)=o_p(1)$.
    We also have $\mathbb{E}\left(e_{it,k}^1\mid \boldsymbol{H}_{it}=\boldsymbol{h}_{it}\right)=0$ and $\mathbb{E}\left(e_{it,k}^0\mid \boldsymbol{H}_{it}=\boldsymbol{h}_{it}\right)=0$ under Assumption \ref{Ch3_assume_error}.

    \medskip

    Under the assumptions and by the Cauchy-Schwarz inequality, there exists $M^*\in[0,\infty)$ such that $\mathbb{E}\left\vert\boldsymbol{Z}_{it}'\left(\widehat{\boldsymbol{\theta}}_{t,k}^1-\boldsymbol{\theta}_{t,k}^1\right)\right\vert\le\left(\mathbb{E}\left\Vert\boldsymbol{Z}_{it}\right\Vert^2\right)^{1/2}\left(\mathbb{E}\left\Vert\widehat{\boldsymbol{\theta}}_{t,k}^1-\boldsymbol{\theta}_{t,k}^1\right\Vert^2\right)^{1/2}<M^*$,
    $\mathbb{E}\left\vert\boldsymbol{Z}_{it}'\left(\widehat{\boldsymbol{\theta}}_{t,k}^0-\boldsymbol{\theta}_{t,k}^0\right)\right\vert<M^*$,
    and $\mathbb{E}\left\vert e_{it,k}^0-e_{it,k}^1\right\vert<M^*$.
    By the triangle inequality, $\mathbb{E}\left\vert\widehat{\tau}_{it,k}-\tau_{it,k}\right\vert\le\mathbb{E}\left\vert\boldsymbol{Z}_{it}'\left(\widehat{\boldsymbol{\theta}}_{t,k}^1-\boldsymbol{\theta}_{t,k}^1\right)\right\vert+\mathbb{E}\left\vert\boldsymbol{Z}_{it}'\left(\widehat{\boldsymbol{\theta}}_{t,k}^0-\boldsymbol{\theta}_{t,k}^0\right)\right\vert+\mathbb{E}\left\vert e_{it,k}^0-e_{it,k}^1\right\vert<3M^*$, which implies that $\widehat{\tau}_{it,k}-\tau_{it,k}$ is uniformly integrable. Then by Lebesgue's Dominated Convergence Theorem, convergence in probability implies convergence in means, i.e.,
    \begin{align*}
        & \underset{N_1,N_0\rightarrow\infty}{\text{lim}}\mathbb{E}\left(\widehat{\tau}_{it,k}-\tau_{it,k}\mid \boldsymbol{H}_{it}=\boldsymbol{h}_{it}\right)               \\
      = & \mathbb{E}\left[\underset{N_1,N_0\rightarrow\infty}{\text{plim}}\left(\widehat{\tau}_{it,k}-\tau_{it,k}\right)\mid \boldsymbol{H}_{it}=\boldsymbol{h}_{it}\right] \\
      = & \mathbb{E}\left(e_{it,k}^0-e_{it,k}^1\mid \boldsymbol{H}_{it}=\boldsymbol{h}_{it}\right)                                                                          \\
      = & 0.
    \end{align*}

  \end{proof}

  \begin{proof}[Proof of Proposition \ref{Ch3_prop_ATE_GMM}]
    Under Assumptions \ref{Ch3_assume_predictor}-\ref{Ch3_assume_rank}, central limit theorem applies, and we have
    \begin{align*}
        & \frac{1}{N}\sum_{i=1}^N\left(\widehat{\tau}_{it,k}-\tau_{it,k}\right)                                                                                                                                                                                                                                         \\
      = & \frac{1}{N}\sum_{i=1}^N\boldsymbol{Z}_{it}'\left(\widehat{\boldsymbol{\theta}}_{t,k}^1-\boldsymbol{\theta}_{t,k}^1\right)-\frac{1}{N}\sum_{i=1}^N\boldsymbol{Z}_{it}'\left(\widehat{\boldsymbol{\theta}}_{t,k}^0-\boldsymbol{\theta}_{t,k}^0\right)-\frac{1}{N}\sum_{i=1}^N\left(e_{it,k}^1-e_{it,k}^0\right) \\
      = & O_p\left(N_1^{-1/2}\right)+O_p\left(N_0^{-1/2}\right)+O_p\left(\left(N_1+N_0\right)^{-1/2}\right)                                                                                                                                                                                                             \\
      = & O_p\left(N_1^{-1/2}\right)+O_p\left(N_0^{-1/2}\right).
    \end{align*}

    Since $\frac{1}{N}\sum_{i=1}^N\tau_{it,k}-\mathbb{E}\left(\tau_{it,k}\right)=O_p\left(\left(N_1+N_0\right)^{-1/2}\right)$.
    We have that $\frac{1}{N}\sum_{i=1}^N\widehat{\tau}_{it,k}-\tau_{t,k}=\frac{1}{N}\sum_{i=1}^N\widehat{\tau}_{it,k}-\frac{1}{N}\sum_{i=1}^N\tau_{it,k}+\frac{1}{N}\sum_{i=1}^N\tau_{it,k}-\tau_{t,k}=O_p\left(N_1^{-1/2}\right)+O_p\left(N_0^{-1/2}\right)$.

  \end{proof}

  \begin{proof}[Proof of Proposition \ref{Ch3_prop_OLS}]

    (i)
    \begin{align*}
      \tilde{\tau}_{it,k}-\tau_{it,k}= & \boldsymbol{Z}_{it}'\left(\widehat{\boldsymbol{\theta}}_{t,k}^{*1}-\boldsymbol{\theta}_{t,k}^{*1}\right)-\boldsymbol{Z}_{it}'\left(\widehat{\boldsymbol{\theta}}_{t,k}^{*0}-\boldsymbol{\theta}_{t,k}^{*0}\right)-\left(u_{it,k}^1-u_{it,k}^0\right).
    \end{align*}

    Since $\mathbb{E}\left(u_{it,k}^1\mid\boldsymbol{Z}_{it}\right)=0$ and under Assumption \ref{Ch3_assume_error}, we have
    $\mathbb{E}\left(\widehat{\boldsymbol{\theta}}_{t,k}^{*1}-\boldsymbol{\theta}_{t,k}^{*1}\mid\boldsymbol{Z}_t\right)=0$, and
    $\mathbb{E}\left(\widehat{\boldsymbol{\theta}}_{t,k}^{*1}-\boldsymbol{\theta}_{t,k}^{*1}\mid\boldsymbol{Z}_{it}\right)=\mathbb{E}_{-i}\left[\mathbb{E}\left(\widehat{\boldsymbol{\theta}}_{t,k}^{*1}-\boldsymbol{\theta}_{t,k}^{*1}\mid\boldsymbol{Z}_t\right)\right]=0$, where $\mathbb{E}_{-i}(\cdot)$ denotes the expectation taken with respect to $\boldsymbol{Z}_{jt},\ j\neq i$.
    Similarly, we have $\mathbb{E}\left(\widehat{\boldsymbol{\theta}}_{t,k}^{*0}-\boldsymbol{\theta}_{t,k}^{*0}\mid\boldsymbol{Z}_{it}\right)=0$.

    Thus, $\mathbb{E}\left(\tilde{\tau}_{it,k}-\tau_{it,k}\mid \boldsymbol{Z}_{it}=\boldsymbol{z}_{it}\right)=0$. It follows that $\mathbb{E}\left(\tilde{\tau}_{t,k}-\tau_{t,k}\right)=0$ using the law of iterated expectations.

    (ii)
    \begin{align*}
        & \frac{1}{N}\sum_{i=1}^N\left(\tilde{\tau}_{it,k}-\tau_{it,k}\right)                                                                                                                                                                                                                                                                                                                                                                                                        \\
      = & \frac{1}{N}\sum_{i=1}^N\left[\boldsymbol{Z}_{it}'\left(\widehat{\boldsymbol{\theta}}_{t,k}^{*1}-\boldsymbol{\theta}_{t,k}^1\right)+{\boldsymbol{\gamma}_{t,k}^1}'\boldsymbol{\varepsilon}_i^\mathcal{P}-\varepsilon_{it,k}^1\right]  - \frac{1}{N}\sum_{i=1}^N\left[\boldsymbol{Z}_{it}'\left(\widehat{\boldsymbol{\theta}}_{t,k}^{*0}-\boldsymbol{\theta}_{t,k}^0\right)+{\boldsymbol{\gamma}_{t,k}^0}'\boldsymbol{\varepsilon}_i^\mathcal{P}-\varepsilon_{it,k}^0\right] \\
      = & \frac{1}{N}\sum_{i=1}^N\left[\boldsymbol{Z}_{it}'\left(\sum_{j\in\mathcal{T}}\boldsymbol{Z}_{jt}\boldsymbol{Z}_{jt}'\right)^{-1}\sum_{j\in\mathcal{T}}\boldsymbol{Z}_{jt}\left(\varepsilon_{jt,k}^1-{\boldsymbol{\gamma}_{t,k}^1}'\boldsymbol{\varepsilon}_j^\mathcal{P}\right)+{\boldsymbol{\gamma}_{t,k}^1}'\boldsymbol{\varepsilon}_i^\mathcal{P}-\varepsilon_{it,k}^1\right]                                                                                           \\
        & -\frac{1}{N}\sum_{i=1}^N\left[\boldsymbol{Z}_{it}'\left(\sum_{j\in\mathcal{C}}\boldsymbol{Z}_{jt}\boldsymbol{Z}_{jt}'\right)^{-1}\sum_{j\in\mathcal{C}}\boldsymbol{Z}_{jt}\left(\varepsilon_{jt,k}^0-{\boldsymbol{\gamma}_{t,k}^0}'\boldsymbol{\varepsilon}_j^\mathcal{P}\right)+{\boldsymbol{\gamma}_{t,k}^0}'\boldsymbol{\varepsilon}_i^\mathcal{P}-\varepsilon_{it,k}^0\right].
    \end{align*}

    The following two statements hold:
    \begin{align*}
      \frac{1}{N}\sum_{i=1}^N\boldsymbol{Z}_{it}'\left(\sum_{j\in\mathcal{T}}\boldsymbol{Z}_{jt}\boldsymbol{Z}_{jt}'\right)^{-1}\sum_{j\in\mathcal{T}}\boldsymbol{Z}_{jt}\varepsilon_{jt,k}^1 & =O_p\left(N_1^{-1/2}\right), \\
      \frac{1}{N}\sum_{i=1}^N\boldsymbol{Z}_{it}'\left(\sum_{j\in\mathcal{C}}\boldsymbol{Z}_{jt}\boldsymbol{Z}_{jt}'\right)^{-1}\sum_{j\in\mathcal{C}}\boldsymbol{Z}_{jt}\varepsilon_{jt,k}^0 & =O_p\left(N_0^{-1/2}\right).
    \end{align*}

    Following similar arguments as in \cite{li2017estimation}, we denote $\tilde{\Delta}_i^1={\boldsymbol{\gamma}_{t,k}^1}'\boldsymbol{\varepsilon}_i^\mathcal{P}-\varepsilon_{it,k}^1-\boldsymbol{Z}_{it}'\left(\sum_{j\in\mathcal{T}}\boldsymbol{Z}_{jt}\boldsymbol{Z}_{jt}'\right)^{-1}\sum_{j\in\mathcal{T}}\boldsymbol{Z}_{jt}{\boldsymbol{\gamma}_{t,k}^1}'\boldsymbol{\varepsilon}_j^\mathcal{P}$, and $\Delta_i^1={\boldsymbol{\gamma}_{t,k}^1}'\boldsymbol{\varepsilon}_i^\mathcal{P}-\varepsilon_{it,k}^1-\boldsymbol{Z}_{it}'\mathbb{E}\left(\boldsymbol{Z}_{it}\boldsymbol{Z}_{it}'\right)^{-1}\mathbb{E}\left(\boldsymbol{Z}_{it}{\boldsymbol{\gamma}_{t,k}^1}'\boldsymbol{\varepsilon}_i^\mathcal{P}\right)$.
    We have that $\mathbb{E}\left(\boldsymbol{Z}_{it}\Delta_i^1\right)=0$.
    Since $\boldsymbol{Z}_{it}$ contains constant 1, it follows that
    $\mathbb{E}\left(\Delta_i^1\right)=0$.
    Thus $\frac{1}{N}\sum_{i=1}^N\tilde{\Delta}_i^1\overset{p}{\rightarrow}\mathbb{E}\left(\Delta_i^1\right)=0$ as $N_1\rightarrow\infty$, and $\frac{1}{N}\sum_{i=1}^N\tilde{\Delta}_i^1=O_p\left(N_1^{-1/2}\right)$.
    Similarly, we have $\frac{1}{N}\sum_{i=1}^N\tilde{\Delta}_i^0=O_p\left(N_0^{-1/2}\right)$.

    Thus, $\frac{1}{N}\sum_{i=1}^N\tilde{\tau}_{it,k}-\frac{1}{N}\sum_{i=1}^N\tau_{it,k}=O_p\left(N_1^{-1/2}\right)+O_p\left(N_0^{-1/2}\right)$, and $\frac{1}{N}\sum_{i=1}^N\tilde{\tau}_{it,k}-\tau_{t,k}=O_p\left(N_1^{-1/2}\right)+O_p\left(N_0^{-1/2}\right)$.

  \end{proof}

\end{appendices}

 \bibliography{References3}
 \bibliographystyle{apalike}

 \end{document}